\PassOptionsToPackage{dvipsnames}{xcolor}
\documentclass[onefignum,onetabnum,twoside,reprint]{siamart171218}

\overfullrule=0pt

\usepackage[dvipsnames]{xcolor}

\usepackage[T1]{fontenc}
\usepackage[utf8]{inputenc}
\usepackage[american]{babel}

% \usepackage{tikz}
% \usetikzlibrary
%   {arrows,calc,through,backgrounds,matrix,positioning,decorations.pathmorphing}
% \usepackage{gnuplot-lua-tikz}
% \usetikzlibrary{external}
% \tikzexternalize

\usepackage{amsmath}
\usepackage{amssymb}
\usepackage{bbm}
\usepackage{booktabs}
\usepackage{here}
\usepackage{hyperref}
\usepackage[square,numbers,sort&compress]{natbib}
\usepackage{subfig}

%%%%%%%%%%%%%%%%%%%%%%%%%%%%%%%%%%%%%%%%%%%%%%%%%%%%%%%%%%%%%%%%%%%%%%%%%%%%%%%%
%%%%%%%%%%%%%%%%%%%%%%%%%%%%%%%%%%%%%%%%%%%%%%%%%%%%%%%%%%%%%%%%%%%%%%%%%%%%%%%%
%%%%%%%%%%%%%%%%%%%%%%%%%%%%%%%%%%%%%%%%%%%%%%%%%%%%%%%%%%%%%%%%%%%%%%%%%%%%%%%%

\newcommand{\R}{\mathbb{R}}
\newcommand{\C}{\mathbb{C}}
\newcommand{\e}{\varepsilon}
\newcommand{\eff}{\varepsilon^\text{eff}}

\renewcommand{\vec}[1]{\boldsymbol{#1}}

\newcommand{\vn}{\vec \nu}
\newcommand{\vx}{\vec x}
\newcommand{\vy}{\vec y}
\newcommand{\vz}{\vec z}

\newcommand{\dx} {\,{\mathrm d}x}
\newcommand{\dox}{\,{\mathrm d}o_x}

\newcommand{\doy}{\,{\mathrm d}o_y}
\newcommand{\Ds}{\,\Delta_T}

\newcommand{\cH}{\mathcal H}
\newcommand{\Gper}{G_{\text{per}}}

\newcommand{\Si}{{\Sigma}}
\newcommand{\Silarge}{{\Sigma_\ast}}
\newcommand{\js}[1]{\left[#1\right]_{\Si}}

\newcommand{\im}{\textrm i}

\newcommand{\bS}{\bold S}

\usepackage[page]{appendix}

%%%%%%%%%%%%%%%%%%%%%%%%%%%%%%%%%%%%%%%%%%%%%%%%%%%%%%%%%%%%%%%%%%%%%%%%%%%%%%%%
%%%%%%%%%%%%%%%%%%%%%%%%%%%%%%%%%%%%%%%%%%%%%%%%%%%%%%%%%%%%%%%%%%%%%%%%%%%%%%%%
%%%%%%%%%%%%%%%%%%%%%%%%%%%%%%%%%%%%%%%%%%%%%%%%%%%%%%%%%%%%%%%%%%%%%%%%%%%%%%%%

\headers
  {Lorentz Resonance in the Homogenization of Plasmonic Crystals}
  {Li, Lipton, Maier}

\title{Lorentz Resonance in the Homogenization of Plasmonic Crystals}

\author{
  Wei Li%
  \thanks{%
    Department of Mathematical Sciences, DePaul University, Chicago, IL
    60614, USA}
  \and Robert Lipton%
  \thanks{%
    Department of Mathematics, Louisiana State University, Baton Rouge, LA
    70803, USA}
  \and Matthias Maier%
  \thanks{%
    Department of Mathematics, Texas A\&M University, College %
    Station, TX 77843, USA.}
}

\begin{document}

\maketitle

\begin{abstract}
  We explain the Lorentz resonances in plasmonic crystals that
  consist of 2D nano dielectric inclusions as the interaction between
  resonant material properties and geometric resonances of electrostatic
  nature. One example of such plasmonic crystals are graphene nanosheets
  that are periodically arranged within a non-magnetic bulk dielectric. We
  identify local geometric resonances on the length scale of the small
  scale period. From a materials perspective, the graphene surface exhibits
  a dispersive surface conductance captured by the Drude model. Together
  these phenomena conspire to generate Lorentz resonances at frequencies
  controlled by the surface geometry and the surface conductance.

  The Lorentz resonances found in the frequency response of the effective
  dielectric tensor of the bulk metamaterial is shown to be given by an
  explicit formula, in which material properties and geometric resonances
  are decoupled. This formula is rigorous and obtained directly from
  corrector fields describing local electrostatic fields inside the
  heterogeneous structure.

  Our analytical findings can serve as an efficient computational tool to
  describe the general frequency dependence of periodic optical devices. As
  a concrete example, we investigate two prototypical geometries composed
  of nanotubes and nanoribbons.
\end{abstract}

%%%%%%%%%%%%%%%%%%%%%%%%%%%%%%%%%%%%%%%%%%%%%%%%%%%%%%%%%%%%%%%%%%%%%%%%%%%%%%%%
%%%%%%%%%%%%%%%%%%%%%%%%%%%%%%%%%%%%%%%%%%%%%%%%%%%%%%%%%%%%%%%%%%%%%%%%%%%%%%%%
%%%%%%%%%%%%%%%%%%%%%%%%%%%%%%%%%%%%%%%%%%%%%%%%%%%%%%%%%%%%%%%%%%%%%%%%%%%%%%%%

\section{Introduction}\label{sec:intro}

Novel frequency dependent electromagnetic behavior can be generated by
patterned dispersive dielectric metamaterials undergoing localized
geometric resonance. Here the period of the pattern lies below the wavelength of operation. 
Examples include plasmonic
metasurfaces~\cite{zhao2011,zhao2014}, band gaps generated by periodic
configurations of local plasmon resonators \cite{dregley2012}, and beam
steering \cite{pors2013}. In this work we contribute to the \emph{analytic}
understanding of such periodic optical devices by investigating the role of
local (frequency independent) geometric features and (frequency dependent)
material properties. In particular, we explain the appearance of Lorentz
resonances generated by periodically patterned dispersive dielectrics
\emph{as the interaction} between resonant material properties and local
geometric resonances of electrostatic nature.

Concretely, we shall examine the optical frequency response of plasmonic
crystals formed by 2D material inclusions (such as graphene) embedded in a
non-magnetic bulk dielectric host. We use a Drude model for the local
conductivity response of the 2D material but allow for a fairly general
periodic geometry including, for example, graphene nanoribbons, or graphene
nanotubes. In such geometries, frequency independent geometric resonances
will be identified and characterized that occur on the length scale of the
period of the 2D material inclusions. These local resonances are novel as
they exist both on the surface of the sheets and in the bulk. Together with
the dispersive surface conductance of the 2D material, both phenomena
conspire to generate Lorentz resonances in the effective optical frequency
response of the metamaterial. The resonance frequencies are controlled by
the surface geometry and the surface conductance.

The Lorentz resonances for the effective dielectric tensor or equivalently
the effective index of refraction for the bulk metamaterial are shown to be
given by an explicit formula. This formula is rigorous and obtained
directly from the corrector fields describing local electrostatic fields inside
the heterogeneous structure. The local boundary value problem for the
correctors follow from the periodic homogenization theory for Maxwell's
equations developed
in~\cite{amirat2017,maierxxa,wellander2001,wellander2002,wellander2003}.
The formula for the effective dielectric constant obtained here is notable
in that the local geometric resonances and local surface conductivity are
uncoupled. This offers the opportunity for efficient computation of the
effective dielectric constant through the computation of the local
geometric resonances that are independent of the  specific material
properties. The interaction between geometry and material dispersion is
displayed explicitly in the rigorously derived formula.

In detail, our contributions with the current work can be summarized as
follows:
\vspace{-1em}
\begin{itemize}
  \item
    We describe the interplay between frequency-independent geometric
    nanoscale resonances and frequency-dependent local conductivity models
    that results in Lorentz resonances in the effective optical frequency
    response. We derive an explicit formula for the frequency response
    rigorously from a mathematical homogenization theory for Maxwell's
    equations for periodic 2D material inclusions.
  \item
    The spectral decomposition is enabled by identifying an underlying
    compact self-adjoint operator on a proper function space. This was done
    by symmetrizing a non-hermitian operator.
  \item
    We discuss how to use the analytic result for computing approximations
    on the frequency response of periodic optical configurations. This
    approach offers a significant saving in computational resources because
    only one frequency-independent geometric eigenvalue problem has to be
    computed, in contrast to computing the corrector field for a huge
    number of fixed frequencies~\cite{maierxxa,maier19c}.
  \item
    We examine two prototypical geometries---a nanotube, and a nanoribbon
    configuration---in more detail. The latter one is analytically and
    computationally much more challenging due to singularities at interior
    2D material edges. We discuss decay estimates and examine the
    approximation quality of our computational approach.
\end{itemize}

%%%%%%%%%%%%%%%%%%%%%%%%%%%%%%%%%%%%%%%%%%%%%%%%%%%%%%%%%%%%%%%%%%%%%%%%%%%%%%%%
\subsection{Background: Homogenization of plasmonic crystals}

The following analytical investigation is based on a rigorous periodic
homogenization theory~\cite{amirat2017, maierxxa, wellander2001,
wellander2002, wellander2003}. For the sake of simplicity, we will base our
analytical investigation on a slightly simplified setting that we quickly
outline here.

\begin{figure}[t]
  \centering
    \subfloat[]{
%      \begin{tikzpicture}[scale=1.15]
%        \path [thick, draw]
%          (-1, -1) -- (1, -1) -- (1, 1) -- (-1, 1) -- cycle;
%        \path [thick, draw] (1,-1) -- (1.5, -0.5) -- (1.5, 1.5) -- (1, 1);
%        \path [thick, draw] (-1,1) -- (-0.5, 1.5) -- (1.5, 1.5);
%        \path [thick, draw, dashed]
%          (-0.5, 1.5) -- (-0.5, -0.5) -- (1.5, -0.5);
%        \path [thick, draw, dashed] (-0.5, -0.5) -- (-1.0, -1.0);
%        %
%        \path [fill=black, fill opacity=0.2]
%          (-0.8, 0) to[out=-30,in=170] (0.8, 0) to[out=-30,in=170] (1.3, 0.5)
%          to[out=170,in=-30] (-0.3, 0.5) to[out=170,in=-30] (-0.8, 0.0);
%        \path [thick, draw, dotted] (-0.8, 0) to[out=-30,in=170] (0.8, 0);
%        \path [thick, draw, dotted] (-0.3, 0.5) to[out=-30,in=170] (1.3, 0.5);
%        \path [thick, draw] (0.8, 0) to[out=-30,in=170] (1.3, 0.5);
%        \path [thick, draw] (-0.8, 0.0) to[out=-30,in=170] (-0.3, 0.5);
%        %
%        \path [thick, draw] (-1.4, 0.5) -- (0.0, 0.1);
%        \node at (-1.6, 0.5) {\small $\Sigma$};
%        \node at (-1.6, 0.15) {\small $\sigma(\omega)$};
%      \end{tikzpicture}}
      \includegraphics{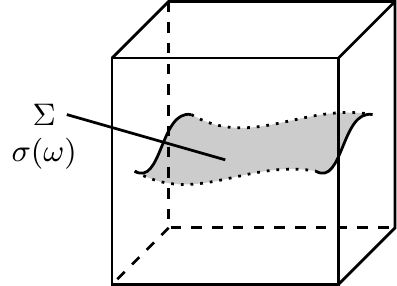}}
  \hspace{2em}
  \subfloat[]{
%    \begin{tikzpicture}[scale=1.15]
%      \foreach \x in {0,...,6} {
%        \foreach \y in {0,...,5} {
%          \path [fill=white]
%            (-1.42+0.5*\y, -1+0.3*\x)
%            to[out=-30,in=170] (-1.10+0.5*\y, -1+0.3*\x)
%            to[out=-30,in=170] (-0.85+0.5*\y, -0.75+0.3*\x)
%            to[out=-30,in=170] (-0.60+0.5*\y, -0.50+0.3*\x)
%            to[out=170,in=-30] (-0.92+0.5*\y, -0.50+0.3*\x)
%            to[out=170,in=-30] (-1.17+0.5*\y, -0.75+0.3*\x)
%            to[out=170,in=-30] cycle;
%          \path [thick, draw, fill=black, fill opacity=0.2]
%            (-1.42+0.5*\y, -1+0.3*\x)
%            to[out=-30,in=170] (-1.10+0.5*\y, -1+0.3*\x)
%            to[out=-30,in=170] (-0.85+0.5*\y, -0.75+0.3*\x)
%            to[out=-30,in=170] (-0.60+0.5*\y, -0.50+0.3*\x)
%            to[out=170,in=-30] (-0.92+0.5*\y, -0.50+0.3*\x)
%            to[out=170,in=-30] (-1.17+0.5*\y, -0.75+0.3*\x)
%            to[out=170,in=-30] cycle;
%        }
%      }
%    \end{tikzpicture}}
      \includegraphics{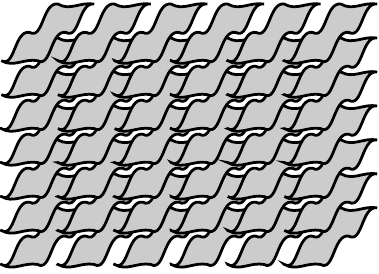}}
  \hspace{2em}
  \subfloat[]{%
    \includegraphics{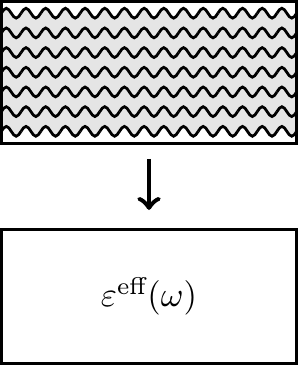}}
  \caption{The homogenization procedure: (a) the nanoscale unit cell $Y$
    consisting of 2D metallic inclusions $\Sigma$ with surface conductivity
    $\sigma(\omega)$ in an ambient host material with permittivity $\e$;
    (b) the plasmonic crystal formed by many scaled and repeated copies of
    $Y$ in every space dimension; (c) a schematic of the homogenization
    process in which the nanoscale structure is replace by a homogeneous
    material with effective permittivity $\eff$.}
  \label{fig:unit_cell}
\end{figure}
Consider a three-dimensional plasmonic crystal consisting of periodic
copies of a \emph{representative volume element} $Y$, which incorporates
nanoscale inclusions given by 2D material surfaces (see
Figure~\ref{fig:unit_cell}) of reasonably arbitrary shape (specified in
Sec. \ref{sec:intro}\ref{sec:summary} and Appendix \ref{app:open}). The
conductivity of the surfaces is assumed to obey the Drude model:
\begin{align*}
  \sigma(\omega)\;=\;
  \frac{\im\,\omega_p}{\omega+\im/\tau},
\end{align*}
where $\im$ denotes the imaginary unit, $\omega$ is the angular frequency,
$\omega_p = 4\,\alpha \approx 4/137$ is a (rescaled) Drude weight, and
$\tau$ is a material-dependent relaxation time. Here, we have
non-dimensionalized all quantities by applying a convenient
rescaling~\cite{maierxxc}: $\tilde\omega = \frac{\hbar\omega}{E_F}$, where
$E_{\text{F}}$ denotes the Fermi energy associated with the 2D material and
$\hbar$ is the reduced Planck constant; $\tilde\sigma(\tilde\omega) =
\sqrt{\frac{\mu_0}{\e_0}}\,\sigma(\omega)$, where $\mu_0$ and $\e_0$ denote
the vacuum permeability and permittivity, respectively. We set the length,
height, and width of the representative volume element to one, $Y=[0,1]^3$.
Furthermore, we assume that the dielectric host has a uniform and isotropic
relative permittivity $\varepsilon$.

It can then be shown \cite{maierxxa,maierxxc} that for sufficiently small
representative volume element $Y$ and sufficiently many repetitions of $Y$,
i.\,e., a sufficiently large plasmonic crystal, the \emph{effective}
conductivity of the plasmonic crystal is given by a uniform,
frequency-dependent conductivity tensor
\begin{align}
  \label{eq:intro:effective_perm}
  \eff_{ij}(\omega)\;=\;
  \varepsilon\,\delta_{ij}\;-\;
  \frac{\sigma(\omega)}{\im\omega}\,\int_\Si \big\{P_T(\vec e_j)\,
  +\,\nabla_T{\chi_j(\omega,\vx)}\big\}\cdot P_T(\vec e_i)\dox,
  \quad i,j=1,2,3.
\end{align}
Here,  {$\vx$ represents the spatial coordinates,} $\delta_{ij}$
 {is} Kronecker's Delta, $\vec e_j$ is the $j$-th unit
vector, $\Si$ denotes the 2D material surface (embedded in $Y$), $P_T$ is
the projection of a vector onto the two-dimensional tangential space of
$\Si$, and $\nabla_T=P_T\nabla$ denotes the tangential gradient (with
respect to $\Si$).

The $Y$-periodic corrector field $\vec\chi(\vec x)$ for closed $\Sigma$
is the solution of the \emph{cell problem} \cite{maierxxa},
\begin{align}
  \label{eq:intro:strong_full}
  \begin{cases}
    \begin{aligned}
    \Delta \chi_j(\vx)\big)
      &= 0
      && \text{in }Y\setminus\Sigma~,
      \\[0.5em]
      \js{\chi_j(\vx))} 
      &= 0
      && \text{on }\Sigma~,
      \\[0.5em]
      \e\js{\vn\cdot\nabla\chi_j(\vx)}
      &=
      \frac{\sigma}{i\omega}\nabla_T\cdot\big(P_T\vec e_j+ \nabla_T\chi_j(\vx)\big)
      && \text{on }\Sigma~,\color{black}
    \end{aligned}
  \end{cases}\hspace{-2.75em}
\end{align}
 {where}, $\vn$ is the unit outward normal of $\Si$ at $\vx$,
and $\left[f\right](\vx)$ denotes the jump of a quantity $f$ across the
surface $\Si$ along the normal direction of $\Si$, viz.,
\begin{align*}
  \left[f\right](\vx)
  \,:=\,
  \lim_{\alpha\searrow0}
  \Big(f(\vx+\alpha\vn) - f(\vx-\alpha\vn)\Big)\quad
  \text{for }
  \vx\in\Si.
\end{align*}

The novelty of plasmonic materials is that they are used to control light
at wavelengths much larger than the characteristic length scale of the
period. Thus understanding wave dispersion for such systems through
frequency-dependent effective behavior is quite natural. Recently frequency
dependent effective dispersive behavior of a finite number of metallic
sheets embedded in a dielectric host is compared to direct numerical
simulation and shown to agree up to a negligible error \cite{maierxxc}. In
another work the frequency dependence of effective properties have been
mathematically proven to deliver the leading-order dispersive behavior for
subwavelength plasmonic composites, this is rigorously  done in Theorem 4
of \cite{ChenLipton}. Last it is noted that the Lorentz resonance for a
single particle is not sufficient for understanding periodic subwavelength
patterned arrays of inhomogeneities as it ignores close range
inter-particle interactions that are captured by the local fields that
determine the effective dielectric constant.

%%%%%%%%%%%%%%%%%%%%%%%%%%%%%%%%%%%%%%%%%%%%%%%%%%%%%%%%%%%%%%%%%%%%%%%%%%%%%%%%
\subsection{Summary of the main result}\label{sec:summary}

The objective of our discussion is to decouple the frequency dependence
introduced in \eqref{eq:intro:effective_perm} by the surface conductivity
and other material parameters from the geometric resonances of the
nanostructure. To this end we introduce an auxiliary spectral problem to
identify all $\big\{\lambda_n\big\}\subset \mathbb{C}$ for which there
exists a $\varphi_n$ satisfying
\begin{align*}
  \begin{cases}
    \begin{aligned}
    \Delta \varphi_n(\vx)\big)
      &= 0
      && \text{in }Y\setminus\Sigma~,
      \\[0.5em]
    \js{\varphi_n(\vx))} 
      &= 0
      && \text{on }\Sigma~,
      \\[0.5em]
      \lambda_n\js{\vn\cdot\nabla\varphi_n(\vx)}
      &=
      \nabla_T\cdot\nabla_T\varphi_n(\vx)
      && \text{on }\Sigma~.
    \end{aligned}
  \end{cases}\hspace{-2.75em}
\end{align*}
Introducing $\eta(\omega)=\frac{\sigma(\omega)}{\im\omega}$ we then show
that the effective refractive index in \eqref{eq:intro:effective_perm} can
be expressed by the formula
\begin{align}
  \label{eq:intro:characterization}
  \eff_{ij}(\omega)\;=\;
  \varepsilon\,\delta_{ij}\;-\;
  \eta(\omega)\,\int_\Si P_T(\vec e_j)\cdot P_T(\vec e_i)\dox
  \;-\;
  \sum_{n=1}^\infty\,
  \frac{\lambda_n\,\eta^2(\omega)}{\e-\lambda_n\,\eta(\omega)}
  \; M_{jn}
  \; \overline{M_{in}},
\end{align}
where the factors $M_{jn}$ are defined as
\begin{align*}
  M_{jn} = \int_\Si P_T(\vec e_j)\cdot\nabla_T\overline\varphi_n(\vx)\dox,
  \quad j=1,2,3,\quad n=1,2,\ldots
\end{align*}
The important property of this formula is that the integrals  only depend
on geometry, and the coefficients only depend on frequency. Equating the
real part of the denominator in the coefficients of
\eqref{eq:intro:characterization} to zero, recovers an explicit resonance
frequency $\omega_{R,n}$ for which the contribution of the $n$-th term of
the sum may become dominant,
\begin{align*}
  \omega_{R,n}\;=\;\sqrt{\omega^2_{0,n}-1/(2\tau)^2},
  \quad
  \text{where}\;\;
  \omega^2_{0,n}\;=\;\frac{\lambda_n\omega_p}{\e},
  \qquad n=1,2,\ldots
\end{align*}

%%%%%%%%%%%%%%%%%%%%%%%%%%%%%%%%%%%%%%%%%%%%%%%%%%%%%%%%%%%%%%%%%%%%%%%%%%%%%%%%
\subsection{Past works}

 {%
Plasmonic crystals based on patterned dispersive dielectric 2D material
inclusions has made possible an unprecedented wealth of novel functional
optical devices~\cite{tan2020, mattheakis2016, silveirinha2007, moitra2013,
niu2018, xia2014}. Possible applications range from optical
holography~\cite{wintz2015}, tunable metamaterials~\cite{Nemilentsau2016},
and cloaking~\cite{alu2005}, to subwavelength focusing
lenses~\cite{cheng2014}.
}%

The  {analytical} approach taken here is motivated by earlier
observations of local resonances occurring at the length scale of the
microgeometry. Electrostatic resonances identified at the length scale of
composite geometry were shown to control the effective dielectric responce
associated with crystals made from non-dispersive dielectric inclusions in
the pioneering work of \cite{BergmanC} and \cite{MiltonES}. The associated
representation formulas based on local resonances were extended and applied
to bound the effective dielectric response \cite{MiltonCL},
\cite{GoldenPap}, and \cite{Milton}. Most recently local electrostatic and
plasmonic resonances are used to construct non-magnetic double negative
metamaterials in the near infrared \cite{ChenLipton} and design photonic
band gap materials \cite{RobertRobert1}.

The current work advances the understanding of effective dielectric
behavior by discovering and subsequently taking advantage of local
resonances supported both on surfaces and in the bulk for generating
Lorentz resonances at frequencies explicitly controlled by the
microstructure.

%%%%%%%%%%%%%%%%%%%%%%%%%%%%%%%%%%%%%%%%%%%%%%%%%%%%%%%%%%%%%%%%%%%%%%%%%%%%%%%%
\subsection{Paper organization}

The remainder of the paper is organized as follows. In
Section~\ref{sec:spectral} we introduce the analytical setting and discuss
our spectral decomposition result. The emerging Lorentz resonance and an
application to inverse optical design in discussed in
Section~\ref{sec:physics}. A computational framework based on the spectral
decomposition is outlined in Section~\ref{sec:numerical} and two
prototypical geometries are  {computationally} analyzed. We
discuss implications and conclude in Section~\ref{sec:discussion}.

Analytical technicalities concerning the spectral decomposition result on
closed and open surfaces are outlined in Appendices~\ref{app:closed} and
\ref{app:open}. We summarize some explicit analytical formulas for the
solution of the geometric eigenvalue problem in
Appendix~\ref{app:examples}.

%%%%%%%%%%%%%%%%%%%%%%%%%%%%%%%%%%%%%%%%%%%%%%%%%%%%%%%%%%%%%%%%%%%%%%%%%%%%%%%%
%%%%%%%%%%%%%%%%%%%%%%%%%%%%%%%%%%%%%%%%%%%%%%%%%%%%%%%%%%%%%%%%%%%%%%%%%%%%%%%%
%%%%%%%%%%%%%%%%%%%%%%%%%%%%%%%%%%%%%%%%%%%%%%%%%%%%%%%%%%%%%%%%%%%%%%%%%%%%%%%%

\section{Spectral Decomposition}
\label{sec:spectral}

In this section we introduce and characterize an auxiliary spectral problem
that enables us to derive the spectral decomposition
\eqref{eq:intro:characterization} of the cell problem
\eqref{eq:intro:strong_full}. For the sake of argument we keep the
discussion in this section on a formal level. A mathematically rigorous
formulation of the spectral decomposition for general classes of
\emph{closed} and \emph{open} 2D dielectric inclusions $\Si$ is given in
Appendix~\ref{app:closed} and \ref{app:open}, respectively. Here, a closed
inclusion $\Sigma$ is a $Y$-periodic two-dimensional surface that does not
have any one-dimensional edges in the interior of $Y$. Similarly, an open
inclusion $\Sigma$ is a $Y$-periodic two-dimensional surface that exhibit
an edge in the interior of $Y$; see Figure~\ref{fig:unit_cell}.

\subsection{An auxiliary eigenvalue problem}
As a first step we introduce an auxiliary eigenvalue problem that is
closely related to the cell problem \eqref{eq:intro:strong_full} of the
homogenization process. By removing the forcing $P_T\vec e_j$ and replacing
 {the quotient $\tfrac{\im\omega\e}{\sigma}$}  by a real-valued
eigenvalue $\lambda$ one arrives at the spectral problem: Find all pairs of
eigenvalues $\lambda\in\mathbb{R}$ and corresponding square-integrable
eigenfunctions $\varphi$ such that
\begin{align}
  \begin{cases}
    \label{eq:spectral_problem}
    \begin{aligned}
      \Delta \varphi(\vx)
      &= 0
      && \text{in }Y\setminus\Sigma~,
      \\[0.5em]
      \js{\phi(\vx)}
      &= 0
      && \text{on }\Sigma~,
      \\[0.5em]
      \lambda\js{\vn\cdot\nabla\varphi(\vx)}
      &=
      \Ds\varphi(\vx)
      && \text{on }\Sigma~.
    \end{aligned}
  \end{cases}\hspace{-2.75em}
\end{align}
Here, $\vn$ is again the unit outward normal of $\Si$ at $\vx$. We have set $\Ds:=\nabla_T\cdot\nabla_T$
and $\left[f\right](\vx)$ denotes the jump of a quantity $f$ across the
surface $\Si$ along the normal direction of $\Si$, viz.,
\begin{align*}
  \left[f\right](\vx)
  \,:=\,
  \lim_{\alpha\searrow0}
  \Big(f(\vx+\alpha\vn) - f(\vx-\alpha\vn)\Big)\quad
  \text{for }
  \vx\in\Si.
\end{align*}

Eigenvalue problem \eqref{eq:spectral_problem} is certainly well posed and
will admit an orthonormal basis of square-integrable eigenfunctions
provided one can identify an underlying self-adjoint and compact linear
operator. For all square-integrable \emph{densities} $\gamma(\vx)$ defined
on the surface $\Si$ we thus introduce the periodic single layer operator
$\bS\gamma$ by setting
\begin{align}
  \label{eq:bS}
  (\bS\gamma)(\vx) := \int_\Si \Gper(\vx-\vy) \gamma(\vy) \doy,
  \quad \vx\in Y.
\end{align}
Here, $\Gper$ is the periodic Green's function of the periodic Laplace
problem, viz.,
\begin{align*}
  \Gper(\vx) := \sum_{\vz\in\mathbb Z^n}G_0(\vz + \vx),
  \qquad G_0(x) := -\frac{1}{4\pi |x|}.
\end{align*}
The single layer operator $\bS$ is constructed in such a way that
$\bS\gamma$ satisfies
\begin{align}
  \label{eq:single-layer-potential}
  \Delta\bS\gamma=0\quad\text{in }\Omega\setminus\Si,
  \qquad
  \js{\bS\gamma}=0\quad\text{on }\Si,
  \qquad
  \js{\vn\cdot\nabla(\bS\gamma)}=\gamma\quad\text{on }\Si.
\end{align}
An important insight (that we outline in Appendix~\ref{app:closed}) is
the fact that this process can be reversed: In particular, for every
eigenfunction $\varphi$ that solves \eqref{eq:spectral_problem} one can
find a density $\gamma$ such that $\varphi(\vx)=(\bS\gamma)(\vx)$. This
allows us to substitute the representation $\varphi=\bS\gamma$ into the
last equation of \eqref{eq:spectral_problem}:
\begin{align*}
  \lambda\gamma(\vx) \;=\; \Ds (\bS\gamma)(\vx) \quad \text{on } \Sigma.
\end{align*}
Let $S$ denote the single layer operator $\bS$ restricted to $\Si$ and set
$\xi=S\gamma$. Provided that the inverses $S$ and $\Ds^{-1}$ exist we can
further rearrange the eigenvalue problem \eqref{eq:spectral_problem} into
an equivalent spectral problem:
\begin{align}
  \label{eq:eigenvalue_problem}
  \Ds^{-1}S^{-1}\xi \;=\; \lambda^{-1} \xi~.
\end{align}
We establish in Appendix \ref{app:closed} that for the case of closed
surfaces $\Sigma$ both inverses $S^{-1}$ and $\Ds^{-1}$ do indeed exist and
that the operator $\Ds^{-1}S^{-1}$ is compact and self adjoint on a
modified Hilbert space
\begin{align*}
  N(\Si)
  \;:=\; \left\{
    \xi\in H^{1}(\Si)
    \;:\;
    \int_{\Si} S^{-1}\xi\dox = 0
  \right\},
\end{align*}
with associated norm $\|\nabla_T\cdot\|_{L^2(\Sigma)}$. Here, $H^{1}(\Si)$
denotes the Sobolev space of square integrable functions with
square-integrable generalized derivatives. In summary, this guarantees the
existence of a countable set of real eigenvalues $\{\lambda_n^{-1}\}$,
$n=1\color{black}$, $2$, \ldots converging to zero, and an
associated orthonormal basis of eigenvectors $\{\xi_n\}$ of $N(\Si)$. Note
that by design $\xi_n$ is precisely the restriction of $\varphi_n$ as
characterized by \eqref{eq:spectral_problem} to the surface $\Si$.

\subsection{Spectral characterization of the corrector}
\label{subse:characterization}
Consider now the $Y$-periodic corrector field $\chi(\vx)$, described by the
cell problem \eqref{eq:intro:strong_full}. The aforementioned orthonormal
basis of eigenvectors $\{\xi_n\}$ admits (up to a constant) a
representation
\begin{align*}
  \chi_j(\vx) = \sum_{n=1}^\infty\alpha^n_j \xi_n(\vx)\quad\text{on }\Sigma.
\end{align*}
Substituting this characterization back into \eqref{eq:intro:strong_full}
and a bit of algebra exploiting \eqref{eq:eigenvalue_problem} then yields an
explicit formula for the coefficients:
\begin{align}
  \label{eq:coefficients}
  \alpha_j^n\;=\;
  \frac{\lambda_n\eta(\omega)}{\e-\lambda_n\,\eta(\omega)}
  \int_\Si P_T(\vec e_j)\cdot\nabla_T\overline\xi_n\dox,
  \qquad
  \eta(\omega)\;:=\;\frac{\sigma(\omega)}{\im\omega},
  \qquad
  j=1,2,3.
\end{align}
Similarly, repeating the substitution for Equation
\eqref{eq:intro:strong_full} yields an explicit formula for the frequency
behavior of the effective dielectric tensor:
\begin{align}
  \label{eq:effective_perm_new}
  \eff_{ij}(\omega)\;=\;
  \varepsilon\,\delta_{ij}\;-\;
  \eta(\omega)\,\int_\Si P_T(\vec e_j)\cdot P_T(\vec e_i)\dox
  \;-\;
  \sum_{n=1}^\infty\,
  \frac{\lambda_n\,\eta^2(\omega)}{\e-\lambda_n\,\eta(\omega)}
  \; M_{jn}
  \; \overline{M_{in}},
\end{align}
where the factors $M_{jn}$ are defined as
\begin{align*}
  M_{jn} = \int_\Si P_T(\vec e_j)\cdot\nabla_T\overline\varphi_n(\vx)\dox,
  \quad j=1,2,3,\quad n=1,2,\ldots
\end{align*}

A number of remarks are in order. Eq.~\eqref{eq:effective_perm_new}
separates the frequency dependence of the surface conductivity (included in
$\eta(\omega)$ {)} from the fundamental (frequency
independent) geometric resonances described by eigenvalue problem
\eqref{eq:spectral_problem} that determine eigenvalues $\lambda_k$ and
eigenmodes $\xi_k$. This implies that material properties and geometric
resonances, that both contribute to the frequency response of the effective
permittivity tensor $\eff_{ij}(\omega)$, are decoupled: The spectrum
$\lambda_k$ and eigenmodes $\xi_k$ of \eqref{eq:spectral_problem} only
depend on the (nanoscale) geometry and not on the concrete surface
conductivity model.

%%%%%%%%%%%%%%%%%%%%%%%%%%%%%%%%%%%%%%%%%%%%%%%%%%%%%%%%%%%%%%%%%%%%%%%%%%%%%%%%%
%%%%%%%%%%%%%%%%%%%%%%%%%%%%%%%%%%%%%%%%%%%%%%%%%%%%%%%%%%%%%%%%%%%%%%%%%%%%%%%%%

\section{Macroscale frequency response}
\label{sec:physics}

We now investigate the effective permittivity tensor $\eff(\omega)$ given
by \eqref{eq:effective_perm_new} further: First, the somewhat hidden
Lorentz resonance structure in the coefficients of the sum in
Eq.~\eqref{eq:effective_perm_new} is made explicit. We then discuss how
Eq.~\eqref{eq:effective_perm_new} can be used to facillitate the inverse
design process \cite{Miller2015, molesky}.

\subsection{ {Lorentz dispersive material}}

From \eqref{eq:effective_perm_new} we see that resonances in the temporal
behavior of the tensor emerge whenever the denominator in the coefficients
of the sum is close to zero. Equating the real part of the denominator of
\begin{align*}
  \frac{\lambda_n\eta^2(\omega)}{\varepsilon-\lambda_n\eta(\omega)}
\end{align*}
to zero recovers a critical frequency
\begin{align}
  \omega_{0,n}^2\;=\frac{\lambda_n\omega_p}{\varepsilon}.
\end{align}
Here, we assumed a simply Drude model
$\sigma(\omega)=\frac{\im\omega_p}{\omega+\im/\tau}$ to hold, where
$\omega_p\approx\frac{4}{137}$ is a rescaled Drude weight and $\tau$ is a
material-dependent relaxation time. With this definition in place we can
further manipulate the coefficients:
\begin{align*}
  \frac{\lambda_n\eta^2(\omega)}{\varepsilon-\lambda_n\eta(\omega)}
  \;=\;
  \frac{\omega_{0,n}^2\omega_p}{\omega(\omega+\im/\tau)}
  \;\cdot\;\frac{1}{\omega^2-\omega_{0,n}^2+\im\omega/\tau}\,.
\end{align*}
In general, the angular frequency $\omega$ is much larger than the inverse
of the relaxation time, viz., $\omega\gg1/\tau$. Thus, close to resonance
we can reasonably assume that
$\omega\approx\omega+\im/\tau\approx\omega_{0,n}$ and obtain
\begin{align*}
  \frac{\lambda_n\eta^2(\omega)}{\varepsilon-\lambda_n\eta(\omega)}
  \;\approx\;
  \omega_p \;\frac{1}{\omega^2-\omega_{0,n}^2+\im\omega/\tau}
  \;=\;
  -\omega_p\;\frac{\big(\omega_{0,n}^2-\omega^2\big)-\im\omega/\tau}
  {\big(\omega_{0,n}^2-\omega^2\big)^2+\omega^2/\tau^2}.
\end{align*}
The coefficients are thus Lorentzian with resonance frequency
\begin{align}
  \omega_{R,n}\;=\;\sqrt{\omega^2_{0,n}-1/(2\tau)^2},
  \qquad n=1,2,\ldots
\end{align}
In summary, we obtain that up to a constant:
\begin{align*}
  \eff_{ij}(\omega)\;\sim\;
  \; M_{jn}\overline{M_{in}}\,\omega_p
  \;\frac{\big(\omega_{0,n}^2-\omega^2\big)-\im\omega/\tau}
  {\big(\omega_{0,n}^2-\omega^2\big)^2+\omega^2/\tau^2},
  \quad\text{for}\;\omega\approx\omega_{R,n}.
\end{align*}
We point out that the nanoscale geometry that determines the spectrum
$\lambda_n$ and eigenmodes $\xi_n$ only influences the numerical values of
the resonance frequencies $\omega_{0,n}$ and $\omega_{R,n}$ as well as the
numerical values of the weights $M_{in}$.

 {%
  In summary, this heuristic argument suggests that the macroscale optical
  response of the plasmonic crystal is that of a \emph{Lorentz dispersive
  material} \cite{lorentz1916, KittelBook2005, born2013}: The frequency
  response of the effective permittivity, $\eff(\omega)$, can be
  approximated by a finite sum of Lorentz resonances, with explicit
  formulas for resonant frequencies and coefficients provided by our
  characterization \eqref{eq:effective_perm_new}.
}

\subsection{Inverse optical design}

The rational expression for the effective property given by
\eqref{eq:effective_perm_new}  lends itself to an inverse optimal design
paradigm for a desired dispersive response. Equation
\eqref{eq:effective_perm_new}  shows that the  location of the poles and
 {zeros} are controlled by the eigenvalues $\lambda_n$ and
eigenfunctions $\zeta_n$. These depend on the radius of the nano tubes or
the side-length of the nano ribbon. One can compute a library of
$\lambda_n$ and $\zeta_n$ near the desired operating frequency for a range
of geometric parameters. From this library one can pick the poles and
 {zeros} that delivers the dispersive response closest to the
desired one. Here the focus is on manufacturable designs and future
investigations will investigate libraries of manufacturable geometries to
provide a manufactorable range of resonant response.

%%%%%%%%%%%%%%%%%%%%%%%%%%%%%%%%%%%%%%%%%%%%%%%%%%%%%%%%%%%%%%%%%%%%%%%%%%%%%%%%%
%%%%%%%%%%%%%%%%%%%%%%%%%%%%%%%%%%%%%%%%%%%%%%%%%%%%%%%%%%%%%%%%%%%%%%%%%%%%%%%%%

\section{Computational platform}
\label{sec:numerical}

\begin{figure}[t]
  \centering
%  \begin{tikzpicture}[scale=1.15]
%    \path[thick,draw,->] (-2.5,-0.6) -- (-2.0,-0.6);
%    \path[thick,draw,->] (-2.5,-0.6) -- (-2.5,-0.1);
%    \path[thick,draw,->] (-2.5,-0.6) -- (-2.8,-0.9);
%    \node at (-2.5, +0.1) {\small $x_2$};
%    \node at (-1.8, -0.6) {\small $x_1$};
%    \node at (-3.0, -0.9) {\small $x_3$};
%  \end{tikzpicture}
  \includegraphics{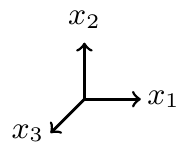}
  \subfloat[]{
%    \begin{tikzpicture}[scale=1.0]
%      \path [thick, draw]
%        (-1, -1) -- (1, -1) -- (1, 1) -- (-1, 1) -- cycle;
%      \path [thick, draw] (1,-1) -- (1.5, -0.5) -- (1.5, 1.5) -- (1, 1);
%      \path [thick, draw] (-1,1) -- (-0.5, 1.5) -- (1.5, 1.5);
%      \path [thick, draw, dashed]
%        (-0.5, 1.5) -- (-0.5, -0.5) -- (1.5, -0.5);
%      \path [thick, draw, dashed] (-0.5, -0.5) -- (-1.0, -1.0);
%
%      \path [fill=black, fill opacity=0.2]
%      (0.85, 0.15) arc (-45:135:0.5) -- (-0.35, 0.35) -- (0.35, -0.35);
%      \path [fill=white] (0.35, -0.35) arc (-45:135:0.5);
%      \path [thick, draw, dashed] (0.5, 0.5) circle (0.5);
%      \path [thick, draw] (0.0, 0.0) circle (0.5);
%      \path [thick, draw, dashed] (0.35, -0.35) -- (0.85, 0.15);
%      \path [thick, draw, dashed] (-0.35, 0.35) -- (0.15, 0.85);
%    \end{tikzpicture}}
    \includegraphics{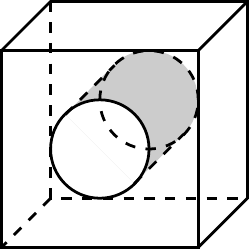}}
  \hspace{2em}
  \subfloat[]{
%    \begin{tikzpicture}[scale=1.0]
%      \path [thick, draw]
%        (-1, -1) -- (1, -1) -- (1, 1) -- (-1, 1) -- cycle;
%      \path [thick, draw] (1,-1) -- (1.5, -0.5) -- (1.5, 1.5) -- (1, 1);
%      \path [thick, draw] (-1,1) -- (-0.5, 1.5) -- (1.5, 1.5);
%      \path [thick, draw, dashed]
%        (-0.5, 1.5) -- (-0.5, -0.5) -- (1.5, -0.5);
%      \path [thick, draw, dashed] (-0.5, -0.5) -- (-1.0, -1.0);
%      \path [fill=black, fill opacity=0.2]
%        (-0.7, 0) -- (0.7, 0) -- (1.2, 0.5) -- (-0.2, 0.5) -- (-0.7, 0.0);
%      \path [thick, draw] (-0.7, 0) -- (0.7, 0);
%      \path [thick, draw, dashed] (0.7, 0) -- (1.2, 0.5);
%      \path [thick, draw, dashed] (-0.2, 0.5) -- (1.2, 0.5);
%      \path [thick, draw, dashed] (-0.7, 0.0) -- (-0.2, 0.5);
%      \path [thick, draw, dashed] (-1, 0) -- (1, 0);
%      \path [thick, draw, dashed] (-0.5, 0.5) -- (1.5, 0.5);
%    \end{tikzpicture}}
    \includegraphics{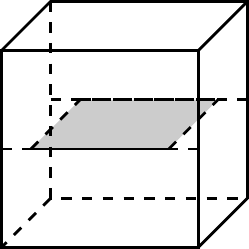}}
  \caption{Prototypical geometries: (a) shows a nanotube configuration and
    (b) is a nanoribbon configuration. The diameter (in (a)) and the width
    (in (b)) was set to 0.8.}
  \label{fig:geometries}
\end{figure}
The spectral decomposition discussed in
 {Section~\ref{sec:spectral}\ref{subse:characterization}}
allows for a very efficient computation of the frequency response of a
nanostructure by first solving a single geometric eigenvalue problem given
by \eqref{eq:spectral_problem} approximately. Then,
\eqref{eq:effective_perm_new} can be invoked to characterize the frequency
response of the permittivity tensor. We will illustrate this procedure in
this section on two prototypical geometries shown in
Figure~\ref{fig:geometries}: a nanotube configuration, which is a closed
smooth surface; and a nanoribbon configuration which is an open surface
with edges. We point out that due to the translation invariance in
$z$-direction of both configuration, the corresponding corrector
$\vec\chi_3$ vanishes. This implies that the corresponding cell problems
\eqref{eq:cell_problem} reduce to a two-dimensional problem, and that the
third diagonal component of the effective conductivity tensor $\eff$ is
simply given by
\begin{align*}
  \eff_{33}
  \;=\;
  \varepsilon\;-\;
  \eta(\omega)\,\int_\Si 1 \dox.
\end{align*}
Due to symmetry we have $\eff_{11}=\eff_{22}$ for the nanotube
configuration. In case of the nanoribbon geometry the averaging process in
$y$-direction is trivial leading to $\eff_{22}=\e$. We thus only need to
determine $\eff_{11}$ computationally in the following.

%%%%%%%%%%%%%%%%%%%%%%%%%%%%%%%%%%%%%%%%%%%%%%%%%%%%%%%%%%%%%%%%%%%%%%%%%%%%%%%%%

\subsection{Numerical computation of the geometric spectrum}
\label{subse:comparison}

\begin{table}[tb]
  \centering
  \subfloat[Nanotubes]{
    \footnotesize\begin{tabular}{ccc}
      \toprule
      order k & $\lambda_k$ & $\Big|\int_{\Si} P_T(\vec e_1)\cdot\nabla_T \overline\xi_n \dox\Big|$
      \\[0.5em]
      % OLD normalization:
      % 1 & 0.5924 & 0.8588 \\
      % 2 &  3.726 & 0.2078 \\
      % 3 &  6.289 & 0.02055 \\
      % 4 &  8.763 & 0.01058 \\
      % 5 &  11.26 & 0.0009243 \\
      % 6 &  13.76 & 0.0003170 \\
      % 7 &  16.27 & 0.00003809 \\
      % NEW normalization:
      1 & 0.5924 & 1.1158      \\
      2 &  3.726 & 0.1077      \\
      3 &  6.289 & 0.008194    \\
      4 &  8.763 & 0.003574    \\
      5 &  11.26 & 0.0002755   \\
      6 &  13.76 & 0.00008546  \\
      7 &  16.27 & 0.000009443 \\
      \bottomrule
   \end{tabular}}
   \hspace{1em}
   \subfloat[Nanoribbons]{
    \footnotesize\begin{tabular}{ccc}
      \toprule
      $\lambda_k$ & $\Big|\int_{\Si} P_T(\vec e_1)\cdot\nabla_T \overline\xi_n \dox\Big|$
      \\[0.5em]
      % OLD normalization:
      % 0.9873 & 0.8489 \\
      %  5.314 & 0.4174 \\
      %  9.283 & 0.3343 \\
      %  13.22 & 0.2877 \\
      %  17.16 & 0.2566 \\
      %  25.02 & 0.2162 \\
      %  28.96 & 0.2021 \\
      % NEW normalization:
      0.9873 & 0.8543  \\
       5.314 & 0.1811  \\
       9.283 & 0.1097  \\
       13.22 & 0.07913 \\
       17.16 & 0.06194 \\
       25.02 & 0.04322 \\
       28.96 & 0.03755 \\
      \bottomrule
    \end{tabular}}
  \caption{
    Numerically computed spectrum and weight coefficients for the two
    geometries (Figure~\ref{fig:geometries}) using the computational
    approach outlined in Section~3(a): Table (a) shows results for the
    nanotube configuration. All roots have multiplicity 2; eigenvalues with
    weight 0 are omitted. Table (b) shows shows results for the nanoribbon
    geometry. Here all roots have multiplicity 1.}
  \label{tab:eigenvalues}
\end{table}

In order to approximate \eqref{eq:eigenvalue_problem} numerically we recast
the eigenvalue problem in the  {variational}
form~\eqref{eq:spectral_problem}: Find $\varphi_n\in\mathcal{H}$ and
$\lambda_n\in\R$ such that
\begin{align*}
  \lambda_n\,\int_Y\nabla\varphi_n(\vx)\cdot\nabla\overline\psi(\vx)\dx
  \;=\;
  \int_\Si\nabla_T\varphi_n(\vx)\cdot\nabla_T\overline\psi(\vx)\dox,
  \quad\forall\psi\in\mathcal{H}.
\end{align*}
This eigenvalue problem can be efficiently approximated with a finite
element discretization which we will quickly outline. We use the finite
element toolkit deal.II~\cite{dealIIcanonical,dealII92}. To achieve a good
numerical convergence order we use unstructured quadrilateral meshes
$\mathcal{T}_h$ for both geometries that are fitted to the curved
hypersurface $\Sigma$ by aligning element boundaries with the
hypersurface~\cite{Maier17} and discretize with high-order Lagrange
elements. Let $\big\{\psi^h_i\big\}_{i\in\{1:\mathcal{N}\}}$ be the nodal
basis of the Lagrange ansatz. We can then define the usual stiffness matrix
$M=(m_{ij})$
\begin{align*}
  m_{ij} \;=\;
  \sum_{Q\in\mathcal{T}_h}
  \int_Q\nabla\psi^h_j(\vx)\cdot\nabla\psi^h_i(\vx)\dx.
\end{align*}
The boundary term requires a modification because the trace
$\nabla_T\psi^h_i$ is not single-valued and only defined on an individual
cell of the mesh. We thus define a matrix $S=(s_{ij})$ by averaging both
cell contributions to the gradient:
\begin{align*}
  s_{ij} \;=\;
  \sum_{Q\in\mathcal{T}_h}
  \frac12
  \int_{\partial Q}\nabla_T\psi^h_j(\vx)\cdot\nabla_T\psi^h_i(\vx)\dox.
\end{align*}
We can then compute an approximate spectrum $\lambda_n^h$ and discrete
eigenfunctions $\xi^h_n = \sum_i\Xi_{n,i}^h\psi^h_i$ by solving the matrix
eigenvalue problem
\begin{align*}
  \big(S+bM\big)\,\Xi_n^h\;=\;\tilde\lambda_n^h\,M\,\Xi_n^h,\color{black}
\end{align*}
with an eigenvalue solver, such as SLEPc~\cite{Hernandez:2005:SSF}. Here,
$b>0$ is a suitably chosen Moebius parameter. The original eigenvalue is
recovered via by setting $\lambda_n^h=\tilde\lambda_n^h-b$. We briefly
comment on one crucial subtlety of this approach. The discrete eigenvectors
$\Xi^h_n$ are orthonormal with respect to the inner product $\langle
M\,.\,,\,.\,\rangle$ due to the mass matrix $M$ appearing on the right hand
side. This inner product is the discrete analogue of
$\int_Y\nabla\,.\cdot\nabla\,.\dx$ and not the normalization we used in
Section~\ref{sec:spectral}. This does not change the computed eigenvalues
but has an effect on the surface integrals that have to be computed next;
see Proposition~\ref{app:prop:spectral_decomposition} and the discussion in
Appendix~\ref{app:open}. This can be easily cured by scaling the surface
integrals appropriately by $1/\sqrt{\lambda^h_n}$, cf.
Equations~\eqref{eq:effective_perm_new} and
\eqref{app:eq:effective_perm_new}. We report numerical results for the two
geometries (Figure~\ref{fig:geometries}) in Table~\ref{tab:eigenvalues}.
The decay rate of the weight coefficients $\Big|\int_{\Si} P_T(\vec
e_1)\cdot\nabla_T \overline\xi_n \dox\Big|$ deserves a short discussion.
The rapid convergence of the coefficients to zero in case of nanotubes is
owed to the regularity of $\Si$ and the absence of interior edges. The
eigenvalues and eigenfunctions of the nanotube geometry can be explicitly
computed when the periodic boundary condition on $Y$ is replaced by an
infinite domain and the Sommerfeld radiation condition (see
Appendix~\ref{app:examples}). In this case only the first order, viz.
$k=1$, has a nonzero contribution to the resonance. The rapid decay of the
weight coefficients in our numerical result for the periodic case is
qualitatively in agreement with this observation. Due to the singularities
at the corners of the nanoribbon geometry~\cite{Rannacher1987}, it is not
surprising that the decay rate of the weight coefficients is limited.

An $n$-th order numerical approximation of the effective
permittivity tensor can be constructed by invoking a discrete counterpart
of \eqref{eq:effective_perm_new}:
\begin{multline}
  \label{eq:approximation}
  \e^{\text{app}}_{11}(\omega)\;=\;
  \varepsilon\,\;-\;
  \eta(\omega)\,
  \sum_{Q\in\mathcal{T}_h}
  \sum_{\partial Q\cap\Si} P_T(\vec e_1)\cdot P_T(\vec e_1)\dox
  \\[-1.0em]
  \;-\;
  \sum_{n=1}^N\,
  \frac{\lambda_n^h\eta^2(\omega)}{\e-\lambda_n^h\,\eta(\omega)}
  \; \left|
  \sum_{Q\in\mathcal{T}_h}
  \sum_{\partial Q\cap\Si} P_T(\vec e_1)\cdot\nabla_T \xi^h_n \dox\right|^2.
\end{multline}

\subsection{Comparison}
\label{subse:comparison}

\begin{figure}[htbp]
  \centering
%  \subfloat[]{\input{figures/circle-frequency-response.tex}}
  \subfloat[]{\includegraphics{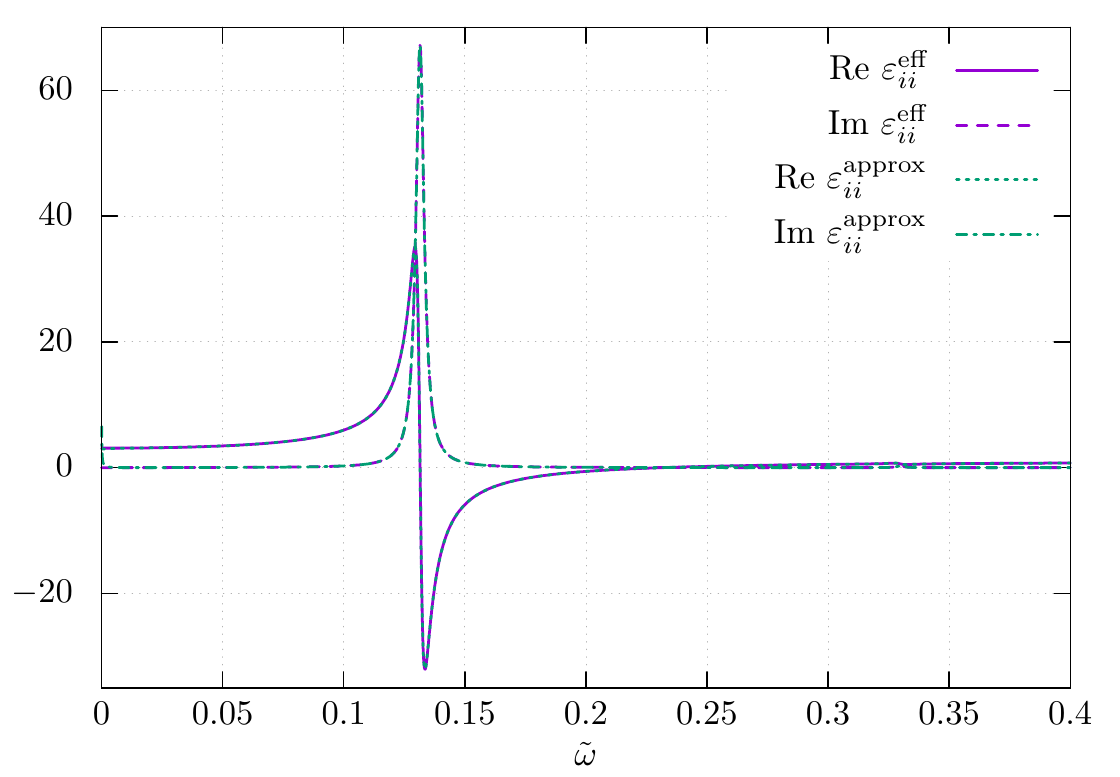}}
  
 % \subfloat[]{\input{figures/circle-relative-error.tex}}
   \subfloat[]{\includegraphics{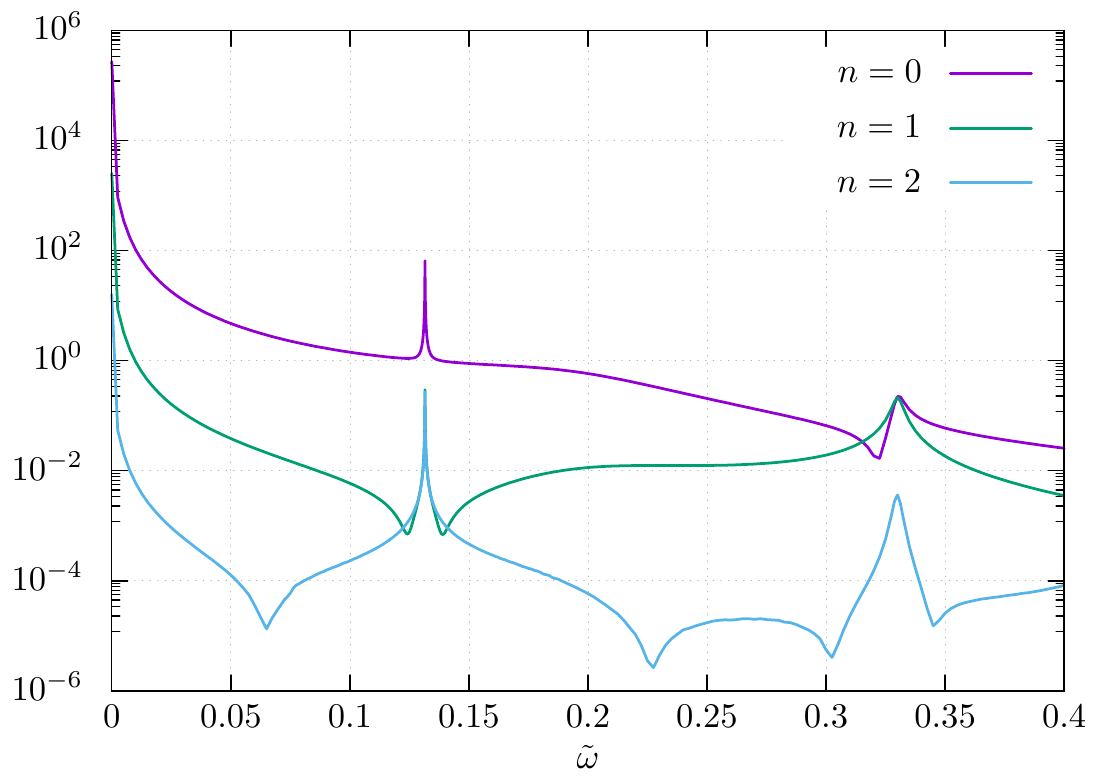}}
  \caption{%
    (a) Frequency response of $\eff_{ii}(\omega)$, $i=1,2$, for the
    nanotube configuration: The solid (real part) and dashed lines
    (imaginary part) are computed by solving the cell problem
    \eqref{eq:intro:strong_full} for every $\omega$; the dotted and
    dash-dotted lines are computed by formula \eqref{eq:effective_perm_new}
    truncated at $n=2$. (b) The corresponding relative error as a function of
    frequency.}
  \label{fig:tubes}
\end{figure}

\begin{figure}[htbp]
  \centering
%  \subfloat[]{\input{figures/ribbons-frequency-response.tex}}
  \subfloat[]{\includegraphics{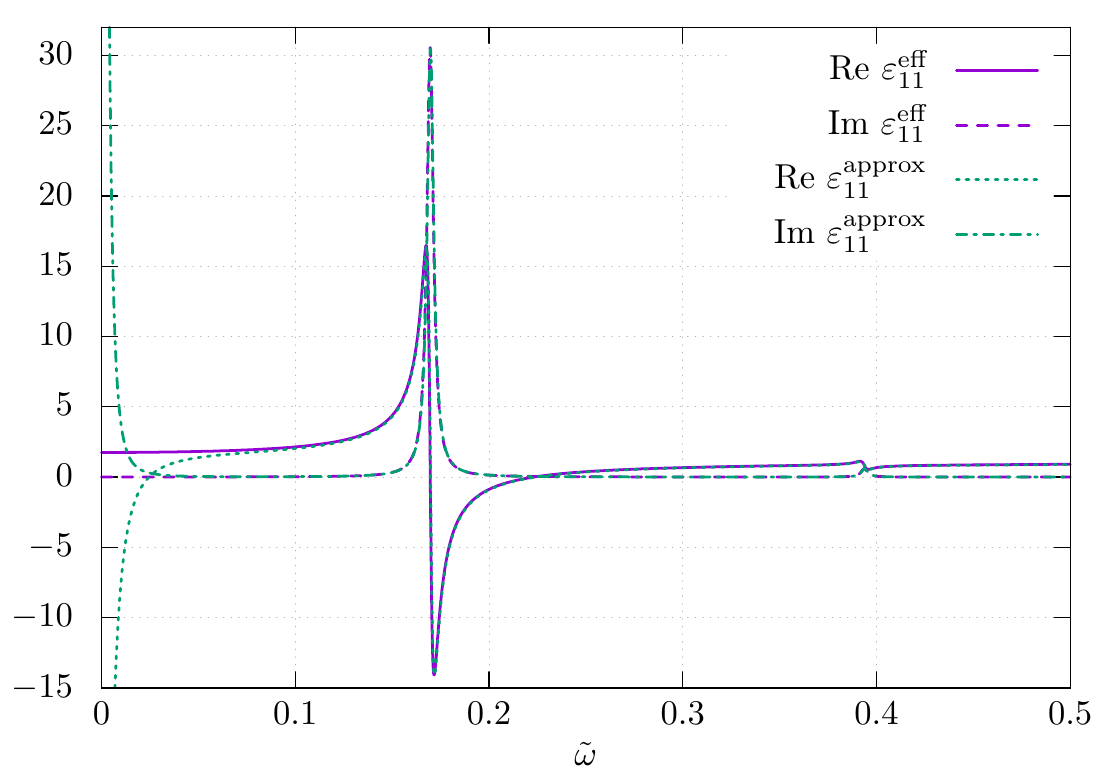}}

%  \subfloat[]{\input{figures/ribbons-relative-error.tex}}
  \subfloat[]{\includegraphics{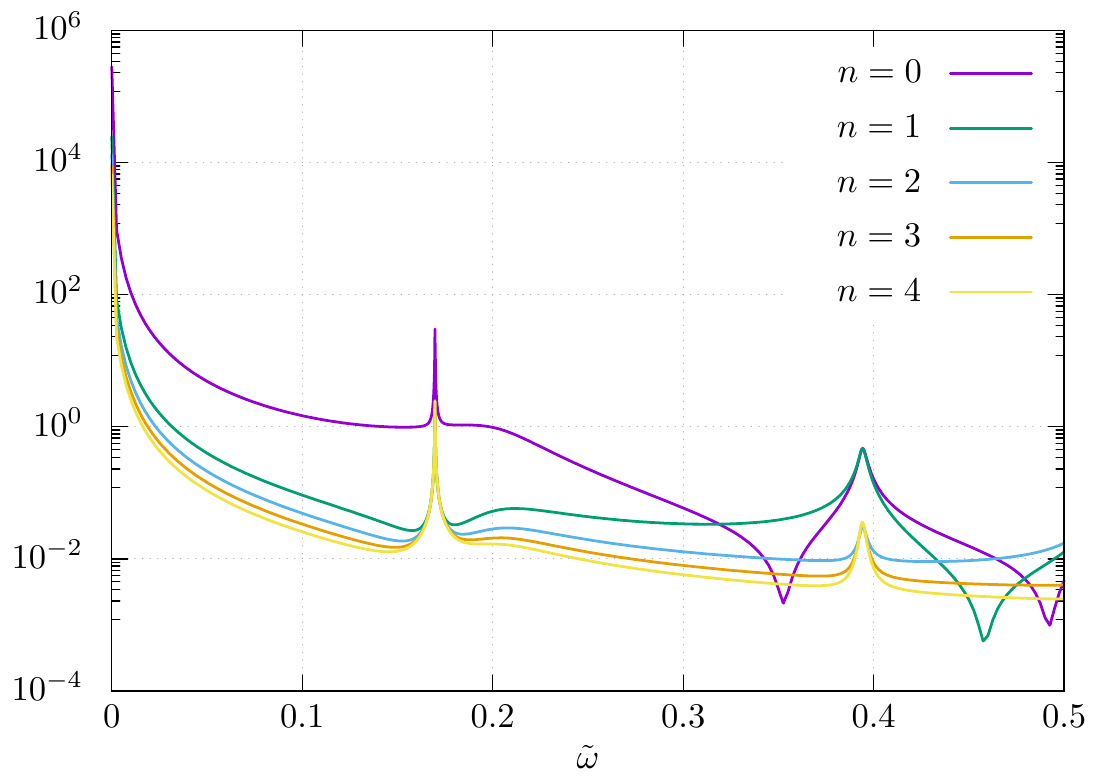}}
  \caption{%
    (a) Frequency response of $\eff_{11}(\omega)$, for the nanoribbon
    configuration: The solid (real part) and dashed lines (imaginary part)
    are computed by solving the cell problem \eqref{eq:intro:strong_full}
    for every $\omega$; the dotted and dash-dotted lines are computed by
    formula \eqref{eq:effective_perm_new} truncated at $n=2$. (b) The
    corresponding relative error as a function of frequency.}
  \label{fig:ribbons}
\end{figure}

Choosing $\e=1$, we compute a reference frequency response of $\eff_{11}(
\omega)$ by finely sampling over a set frequency range $0<\omega<0.5$ and
performing a complete direct numerical computation of the cell problem for
selected frequencies: For every chosen angular frequency $\omega$ we first
determine the corrector by solving \eqref{eq:intro:strong_full} with a
finite-element code \cite{maier19c} up to a suitable resolution (about
110\,k unknowns for the nanotube configuration, and about 130\,k unknowns
for the nanoribbon configuration). The result is plotted in
Figures~\ref{fig:tubes}a and \ref{fig:ribbons}a. In both plots about 700
frequencies were chosen adaptively.

We then compare a second order approximation $\e^{\text{app}}_{11}$ by
using \eqref{eq:approximation} with $n=2$ against the direct numerical
computation graphically in Figures~\ref{fig:tubes}a and \ref{fig:ribbons}a.
For the chosen frequency range we observe an excellent agreement of the
approximative permittivity $\e^{\text{app}}_{11}$ with the reference
computation in the ``eyeball'' norm.

A more detailed comparison of the frequency behavior of the relative error
between both computations is given in Figures~\ref{fig:tubes}b and
\ref{fig:ribbons}b, where also the dependence of the error on the order $n$
of the approximation \eqref{eq:approximation} is visualized. On average we
observe a relative error of less than $1\,\%$. We note that the maxima in
the relative error naturally occur at corresponding Lorentz resonances and
are dominated by the approximation error of the underlying finite element
simulations. We observe an exponential decay of the relative error as a
function of approximation order for the smooth nanotubes geometry (see
Figure~\ref{fig:tubes}b). The corresponding convergence behavior for
nanoribbons as shown in Figure~\ref{fig:ribbons}b is significantly slower.
This is owed to the fact that the edges in the nanoribbon geometry cause
singularity in the solution of the cell problems thus limiting the
approximation order \cite{Rannacher1987}.

%%%%%%%%%%%%%%%%%%%%%%%%%%%%%%%%%%%%%%%%%%%%%%%%%%%%%%%%%%%%%%%%%%%%%%%%%%%%%%%%%
%%%%%%%%%%%%%%%%%%%%%%%%%%%%%%%%%%%%%%%%%%%%%%%%%%%%%%%%%%%%%%%%%%%%%%%%%%%%%%%%%
%%%%%%%%%%%%%%%%%%%%%%%%%%%%%%%%%%%%%%%%%%%%%%%%%%%%%%%%%%%%%%%%%%%%%%%%%%%%%%%%

\section{Conclusion}
\label{sec:discussion}

In this paper, we analyzed the Lorentz resonances in plasmonic
crystals that consist of 2D nano dielectric inclusions embedded in a
nonmagnetic bulk. From the corrector field found in a rigorous
homogenization theory \eqref{eq:cell_problem}, we derived an analytic
expansion formula for the effective permittivity
\eqref{eq:intro:characterization}. This formula decouples the local
geometric resonances and the material properties, and thus enables a very
efficient approximation to compute the frequency response. This formula
holds for inclusions of a large family of geometries, including closed
surfaces (as shown in Section~\ref{sec:spectral} and
Appendix~\ref{app:closed}) and open surfaces that can be completed into
closed surfaces {as shown in Appendix \ref{app:open}.

We observe that up to a constant factor, the $n$th Lorentz resonance is
described by
\begin{align*}
  \frac{\lambda_n\eta^2(\omega)}{\varepsilon-\lambda_n\eta(\omega)}
  \;\approx\;
  \omega_p \;\frac{1}{\omega^2-\omega_{0,n}^2+\im\omega/\tau}
  \;=\;
  -\omega_p\;\frac{\big(\omega_{0,n}^2-\omega^2\big)+\im\omega/\tau}
  {\big(\omega_{0,n}^2-\omega^2\big)^2+\omega^2/\tau^2}.
\end{align*}
We have also observed that a crucial quantity that determines the
convergence speed of this expansion is the decay rate of a weight factor
\begin{align*}
  \Big|M_{jn}\Big|\;=\;\Big|\int_{\Si} P_T(\vec e_j)\cdot\nabla_T \overline\xi_n \dox\Big|^2.
\end{align*}
The decay rate depends on the smoothness of the corrector, i.\,e., whether
singularities due to roughness or edges are present in the cell problem. We
have demonstrated that our spectral decomposition approach offers a
significant saving in computational resources because only one
frequency-independent geometric eigenvalue problem has to be computed, in
contrast to computing the corrector field for a huge number of fixed
frequencies.

%%%%%%%%%%%%%%%%%%%%%%%%%%%%%%%%%%%%%%%%%%%%%%%%%%%%%%%%%%%%%%%%%%%%%%%%%%%%%%%%
%%%%%%%%%%%%%%%%%%%%%%%%%%%%%%%%%%%%%%%%%%%%%%%%%%%%%%%%%%%%%%%%%%%%%%%%%%%%%%%%
%%%%%%%%%%%%%%%%%%%%%%%%%%%%%%%%%%%%%%%%%%%%%%%%%%%%%%%%%%%%%%%%%%%%%%%%%%%%%%%%

\begin{appendices}

%%%%%%%%%%%%%%%%%%%%%%%%%%%%%%%%%%%%%%%%%%%%%%%%%%%%%%%%%%%%%%%%%%%%%%%%%%%%%%%%
\section{Spectral Decomposition for closed surfaces}
\label{app:closed}

We now give a mathematical rigorous prove of the spectral decomposition
introduced in Section~\ref{sec:spectral} through the use of a weak
formulation. The mathematical proof involves simpler function spaces when
the surface $\Sigma$ is closed. For this reason we first discuss the case
of a closed surface and discuss the case of open surfaces based on the
notion of fractional Sobolev spaces in Appendix~\ref{app:open}.

\subsection{The weak formulation}
Provided the surface $\Sigma$ is Lipschitz  continuous (implying it admits
a uniquely defined surface normal), the $Y$-periodic corrector field
$\vec\chi(\vec x)$ is the solution of the variational \emph{cell problem}
\cite{maierxxa},
\begin{multline}
  \label{eq:cell_problem}
  \im\omega\varepsilon\,
  \int_Y\nabla\chi_j(\omega,\vx) \cdot\nabla\overline{\psi(\vx)} \dx
  \;-\;
  \sigma(\omega)\, \int_\Si \nabla_T\chi_j(\omega,\vx) \cdot
  \nabla_T\overline{\psi(\vx)} \dox
  \\
  \;=\;
  \sigma(\omega)
  \int_\Si
  P_T\big(\vec e_j) \cdot
  \nabla_T\overline{\psi(\vx)} \dox.
\end{multline}
The appropriate function space for the variational problem
\eqref{eq:cell_problem} is
\begin{align}
  \label{eq:correctorspace}
  \cH\;:=\;\Big\{\psi\in H^1_{\text{per}}(Y,\C)\,:\, \nabla_T\psi\in
  L^2(\Si,\C),\;\; \int_Y \psi = 0 \Big\}.
\end{align}
Here, $H^1_{\text{per}}(Y)$ denotes the Sobolev space of periodic functions
$u$ such that $u$ and its first order (distributional) partial derivatives
are square integrable in $Y$, and $L^2(\Si)$ denotes the space of
square-integrable functions on $\Si$. The space $\cH$ equipped with the
norm 
\begin{align*}
  \|\,\cdot \,\|_{\cH}^2= \|\nabla\,\cdot \,\|^2_Y + \|\nabla_T\, \cdot
  \,\|^2_\Si
\end{align*}
(and the corresponding inner product) is a Hilbert space. It can be shown
that the corrector problem \eqref{eq:cell_problem} admits a unique solution
$\chi_j\in\cH$ \cite{amirat2017, maierxxa, wellander2001, wellander2002,
wellander2003}.

Thus the auxiliary spectral problem partitioning between the first two
integrals in \eqref{eq:cell_problem}, parallel to
\eqref{eq:spectral_problem}, is to find all pairs of eigenfunctions
$\varphi\in\cH$ and eigenvalues $\lambda\in\mathbb{R}$, such that
\begin{align}\label{eq:spec_weak}
  \lambda\,\int_Y\nabla\varphi\cdot\nabla\overline\psi\dx
  \;=\;
  \int_{\Si}\nabla_T\varphi\cdot\nabla_T\overline\psi\dox
  \quad \text{for all } \psi\, {\in\cH}.
\end{align}

%%%%%%%%%%%%%%%%%%
\subsection{A density representation for the corrector}
\label{sec:density}

The corrector $\chi_i\in\cH$ given by
\eqref{eq:cell_problem} can be characterized in terms of the $Y$-periodic
single layer potential $\bS$ \eqref{eq:bS} with a density $\gamma$.
Recall that we have restricted the discussion
to  the case of $\Si$ without internal edges in $Y$. In this case, the
following two properties hold:
\begin{enumerate}
  \item
    The restricted single layer operator $S\;:\;L^2(\Si)\to H^1(\Si)$
    defined by by \eqref{eq:single-layer-potential} is a bounded,
    invertible operator with a bounded inverse.
  \item
    The jump in the normal derivative of the solution $\chi_i\in\cH$ of
    the cell problem \eqref{eq:cell_problem} on the surface,
    $[\partial_{\vn} \chi_j]_{\Si}$, is in $L^2(\Si)$, where $L^2(\Si)$ the
    space of square integrable functions on $\Si$.
\end{enumerate}
A proof of (i) for the case of Lipschitz continuous $\Sigma$ can be found
in \cite[Thm.\,7.17]{McLean2000}, and Property (ii) is a direct consequence
of the standard trace theorems \cite{gilbarg2015elliptic} and Property (i).
Note that Properties (i) and (ii) do not hold when $\Si$ is an open surface
(i.e., when $\Sigma$ has edges in the interior of $Y$; see
Appendix~\ref{app:open}). Starting from (ii), we set
\begin{align*}
  \gamma\;:=\;[\partial_{\vn}\chi_j]\;\in L^2(\Sigma,\C).
\end{align*}
Recalling \eqref{eq:single-layer-potential} we observe that the difference
$\chi_j - \bS \gamma$ belongs to $H^{1}_{\text{per}}(Y,\C)$ and its
distributional Laplacian is zero everywhere in $Y$. Therefore,
\begin{align*}
 \chi_j = \bS \gamma+C,
\end{align*}
where $C$ is a constant. This suggests the following lemma:
\begin{lemma}
  \label{lemma:searchchi}
  For the corrector $\chi_j$ solving \eqref{eq:cell_problem}, there exists
  a unique $\gamma \in L^{2}(\Si,\C)$ and a unique complex valued constant
  $C$, such that
  \begin{align}
    \chi_j = \bS \gamma+C,
    \qquad\text{with }
    \int_{\Si}\gamma\dox\;=\;0.
  \end{align}
\end{lemma}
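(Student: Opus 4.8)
The plan is to establish existence and uniqueness separately, both resting on the two structural facts already recorded just before the lemma: that the restricted single layer operator $S$ is boundedly invertible on $L^2(\Si)$, and that the normal derivative jump $[\partial_{\vn}\chi_j]_{\Si}$ of the corrector lies in $L^2(\Si)$. First I would set $\gamma := [\partial_{\vn}\chi_j]_{\Si}$, which is a legitimate element of $L^2(\Si,\C)$ by Property (ii). By construction of the single layer potential \eqref{eq:single-layer-potential}, the function $\bS\gamma$ is harmonic in $Y\setminus\Si$, has vanishing jump across $\Si$, and has normal-derivative jump exactly $\gamma$. Hence $\chi_j - \bS\gamma$ has vanishing jump and vanishing normal-derivative jump across $\Si$, so its distributional Laplacian on all of $Y$ equals the Laplacian of $\chi_j$ away from $\Si$, which is zero by the cell problem \eqref{eq:intro:strong_full}; moreover $\chi_j-\bS\gamma$ is $Y$-periodic and in $H^1_{\text{per}}(Y,\C)$. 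A periodic harmonic $H^1$ function on the torus is constant (by Weyl's lemma it is smooth, and then integration of $|\nabla u|^2$ against $u$ over the period cell, using periodicity to kill the boundary term, forces $\nabla u \equiv 0$). Calling this constant $C$ gives the representation $\chi_j = \bS\gamma + C$.

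Next I would normalize. The density $\gamma$ as defined need not integrate to zero, but I can correct this: note that $\bS 1$ (the single layer potential of the constant density $1$) is, by \eqref{eq:single-layer-potential}, harmonic off $\Si$ with zero jump and normal-jump equal to $1$; integrating $\Delta \bS 1 = 0$ over $Y$ and using periodicity together with the divergence theorem shows $\int_\Si 1\,\dox$ must vanish for this to be consistent — i.e. the Gauss constraint forces $\int_\Si \gamma\,\dox = 0$ automatically for \emph{any} $\gamma$ arising as the normal-jump of a periodic harmonic-off-$\Si$ function. Concretely, $0 = \int_Y \Delta(\bS\gamma + C)\,\dx$ splits, by the jump relations, into $\int_\Si [\partial_{\vn}(\bS\gamma)]\,\dox = \int_\Si \gamma\,\dox$ after accounting for the periodic boundary contributions, hence $\int_\Si\gamma\,\dox = 0$. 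So the side condition is not an extra imposition but a consequence, which I would state cleanly.

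For uniqueness, suppose $\bS\gamma_1 + C_1 = \bS\gamma_2 + C_2$ with both $\gamma_i\in L^2(\Si)$ satisfying $\int_\Si\gamma_i = 0$. Then $\bS(\gamma_1-\gamma_2)$ is constant on $Y$, in particular constant on $\Si$, so $S(\gamma_1-\gamma_2)$ is a constant function in $H^1(\Si)$. Invertibility of $S$ alone does not immediately kill this — a constant can be in the range — so here I would use the normal-jump relation once more: $\gamma_1-\gamma_2 = [\partial_{\vn}(\bS(\gamma_1-\gamma_2))]_{\Si} = 0$ because $\bS(\gamma_1-\gamma_2)$ is globally constant and hence has zero normal derivative from both sides. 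Therefore $\gamma_1 = \gamma_2$, and then $C_1 = C_2$ follows. (Alternatively, the mean-zero constraint plus boundedness of $S^{-1}$ pins down $\gamma$, and then $C = \chi_j - \bS\gamma$ evaluated anywhere.)

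The main obstacle, and the only place requiring genuine care rather than bookkeeping, is the rigorous justification that $\chi_j - \bS\gamma$ is a $Y$-periodic $H^1$ function whose distributional Laplacian vanishes on all of $Y$ — that is, verifying that subtracting $\bS\gamma$ exactly cancels \emph{both} the trace jump and the normal-derivative jump of $\chi_j$ across $\Si$, so that no singular distribution supported on $\Si$ survives in $\Delta(\chi_j - \bS\gamma)$. This is where Properties (i) and (ii), the jump relations \eqref{eq:single-layer-potential}, and the $H^1$ regularity of $\bS\gamma$ for $L^2$ densities (which relies on $S:L^2(\Si)\to H^1(\Si)$ bounded) all have to be combined; the periodicity-forces-constant step and the uniqueness argument are then routine consequences.
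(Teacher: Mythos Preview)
Your proposal is correct and follows essentially the same route as the paper: set $\gamma=[\partial_{\vn}\chi_j]_{\Si}$, observe that $\chi_j-\bS\gamma$ is periodic harmonic on all of $Y$ hence constant, deduce $\int_\Si\gamma=0$ from the divergence theorem applied to $\Delta\bS\gamma=0$ on $Y\setminus\Si$, and obtain uniqueness from $\gamma_1-\gamma_2=[\partial_{\vn}\bS(\gamma_1-\gamma_2)]=0$ since $\bS(\gamma_1-\gamma_2)$ is constant. The brief detour through $\bS 1$ in your normalization paragraph is muddled (the conclusion $\int_\Si 1\,\dox=0$ is false and signals that the periodic single layer of the constant density is not well posed), but you abandon it and give the correct argument immediately afterward, so no harm is done.
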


\begin{proof}
  We have already established existence. For the uniqueness, assume that we
  have two representations for $\chi_j$, viz. $\bS\gamma_1+C_1=
  \bS\gamma_2+C_2$. This implies that $\bS(\gamma_1 - \gamma_2)$ is a
  constant in $Y$, and thus
  \begin{align*}
    \gamma_1-\gamma_2 \;=\; [\partial_{\vn} \bS(\gamma_1-\gamma_2)] = 0
    \qquad\text{on }\Si.
  \end{align*}
  It follows that $C_1$ and $C_2$ are also identical.
  Finally, note that $\Delta\bS\gamma=0$ implies
  \begin{align}\label{eq:meanzero}
    \int_{\Si}\gamma \dox =
    \int_{\Si}[\partial_{\vn}\bS \gamma] \dox =
    -\int_{Y\setminus \Si}\Delta\bS\gamma dx = 0.
  \end{align}
\end{proof}

\subsection{An equivalent spectral problem and symmetrization}
%%%%%%%%%%%%%%%%%%%%%%%

Utilizing the same argument again as in the preceding subsection (Appendix
A\ref{sec:density}), for closed $\Sigma$ every eigenfunction $\varphi$ of
the spectral problem \eqref{eq:spec_weak} has a representation
$\varphi=\bS\gamma$, where $\gamma \in L^2(\Sigma)$. Substituting the
representation $\varphi=\bS\gamma$ into \eqref{eq:spectral_problem}, we
obtain an equivalent spectral problem for $\gamma$:
\begin{align*}
  \lambda\,\int_Y\nabla\bS\gamma\cdot\nabla\overline\psi\dx
  \;=\;
  \int_{\Si}\nabla_T\bS\gamma\cdot\nabla_T\overline\psi\dox , \quad
  \text{for all }
  \psi\, {\in\cH}.
\end{align*}
Integration by parts of the volume integral and \eqref{eq:single-layer-potential} further transforms the
eigenvalue problem to an eigenvalue problem described exclusively on $\Si$:
\begin{align*}
  -\lambda\,\int_{\Si}\gamma\,\overline{\psi}\dox
  \;=\;
  \int_{\Si}\nabla_T S\gamma\cdot\nabla_T\overline\psi\dox
  \quad \text{for all } \psi\,\in \cH.
\end{align*}
%
%Here, $S$ is the single layer operator $\bS$ restricted to $\Si$. 
Writing
$\xi=S\gamma$, which is equivalent to $\gamma=S^{-1}\xi$ since $S\;:\;L^2(\Si)\to H^1(\Si)$ is invertible for closed $\Sigma$, we obtain
\begin{align*}
  -\lambda\,\int_{\Si} S^{-1}\xi\,\overline{\psi}\dox
  \;=\;
  \int_{\Si}\nabla_T \xi\cdot\nabla_T\overline\psi\dox , \quad \forall
  \psi\, {\in\cH}.
\end{align*}
%%
%This is parallel to the strong spectral problem \eqref{eq:eigenvalue_problem}.
%%
%\begin{align*}
%  \lambda\,S^{-1}\xi \;=\; \Ds \xi,
%  \qquad
% -\Ds\;:=\;\nabla_T\cdot\nabla_T.
%\end{align*}
%%
%Provided that $\Ds^{-1}$ exists as well we are lead to the eigenvalue problem
%\begin{align}
%  \label{eq:eigenvalue_problem}
%  \Ds^{-1}S^{-1}\xi \;=\; \lambda^{-1} \xi.
%\end{align}
%%
%We will establish that the inverses
%$S^{-1}$ and $\Ds^{-1}$ exist and that the operator $\Ds^{-1}S^{-1}$ is
%compact and self adjoint on a proper Hilbert space
%
%
%%
%\begin{align*}
%  N(\Si)
%  \;:=\; \left\{
%    \xi\in H^{1}(\Si)
%    \;:\;
%    \int_{\Si} S^{-1}\xi\dox = 0.
%  \right\}.
%\end{align*}
%%
%This guarantees the existence of a countable set of real numbers
%$\{\lambda_n^{-1}\}_n$ converging to zero with an associated orthonormal
%basis of eigenvectors $\{\xi_n\}_n$ of that Hilbert space. Note that, by
%design, $\xi_n$ is precisely the restriction of $\varphi_n$ as
%characterized by \eqref{eq:spectral_problem} to the surface $\Si$. A
%number of algebraic manipulations then shows the following result:

%%%%%%%%%%%%%%%%%%%%%%%%%%%%%%%%%%%%%%%%%%%%%%%%%%%%%%%%%%%%%%%%%%%%%%%%%%%%%%%%
\subsection{A compact and self-adjoint operator}

The property of the density function $\gamma$ in Lemma
\ref{lemma:searchchi} suggests that we work with the space
\begin{align}
  \label{eq:orthdecomp}
  N(\Si)
  \;:=\; \left\{
    \xi\in H^{1}(\Si)
    \;:\;
    \int_{\Si} S^{-1}\xi\dox = 0
  \right\}.
\end{align}
A straightforward calculation shows that $N(\Sigma)$ equipped with the norm
$\|\nabla_T \xi\|_{L^2(\Si)}$ is a Hilbert space. The Riesz representation
theorem then establishes a particular inverse of the Laplace-Beltrami
operator $\Delta_T$:
\begin{lemma}
  \label{lem:PDE2}
  For $f \in L^2(\Si)$ with $\int_{\Si}f\dox = 0$, there exists a unique
  $g\in N(\Si)$, such that
  \begin{align*}
   \int_{\Si}\nabla_T g \cdot\nabla_T\overline\psi\dox
    \;=\;
   - \int_{\Si}f \overline\psi\dox,\quad
    \text{for all }
    \psi\in  N(\Si).
  \end{align*}
  Moreover, the solution $g$ is bounded, viz., $\|\nabla_Tg\|_{L^2(\Si)}
  \leq C \|f\|_{L^{2}(\Si)}$. We will denote this solution operator by
  $\Ds^{-1}$.
\end{lemma}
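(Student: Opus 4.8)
The plan is to invoke the Riesz representation theorem on the Hilbert space $N(\Si)$ equipped with the inner product $\langle g,\psi\rangle_{N(\Si)} := \int_{\Si}\nabla_T g\cdot\nabla_T\overline\psi\dox$. First I would observe that the right-hand side of the desired identity, viewed as a function of $\psi$, defines a conjugate-linear functional $\psi\mapsto -\int_{\Si} f\,\overline\psi\dox$ on $N(\Si)$. To apply Riesz I must check this functional is bounded on $N(\Si)$, i.e., bounded with respect to the norm $\|\nabla_T\psi\|_{L^2(\Si)}$. By Cauchy--Schwarz, $\big|\int_{\Si} f\,\overline\psi\dox\big| \le \|f\|_{L^2(\Si)}\,\|\psi\|_{L^2(\Si)}$, so the claim reduces to a Poincar\'e-type inequality: $\|\psi\|_{L^2(\Si)} \le C\,\|\nabla_T\psi\|_{L^2(\Si)}$ for all $\psi\in N(\Si)$.

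The main obstacle is establishing this Poincar\'e inequality, because the constraint defining $N(\Si)$ is $\int_{\Si} S^{-1}\psi\dox = 0$ rather than the more familiar zero-mean condition $\int_{\Si}\psi\dox = 0$; on a closed surface the tangential gradient annihilates only constants, so some normalization is needed to control the $L^2$ norm. I would argue as follows. The usual Poincar\'e inequality on the compact surface $\Si$ gives $\|\psi - \bar\psi\|_{L^2(\Si)} \le C\|\nabla_T\psi\|_{L^2(\Si)}$ where $\bar\psi = |\Si|^{-1}\int_\Si\psi$. It then suffices to bound $|\bar\psi|$ by $\|\nabla_T\psi\|_{L^2(\Si)}$ using the constraint. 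Writing $\psi = (\psi-\bar\psi) + \bar\psi$ and applying $S^{-1}$, the constraint reads $\bar\psi\int_\Si S^{-1}(1)\dox = -\int_\Si S^{-1}(\psi-\bar\psi)\dox$. Since $S^{-1}$ is bounded from $H^1(\Si)$ to $L^2(\Si)$ (by Property (i) of Appendix~A\ref{sec:density}, $S$ is invertible with bounded inverse), the right-hand side is controlled by $\|\psi-\bar\psi\|_{H^1(\Si)} \le C\|\nabla_T\psi\|_{L^2(\Si)}$ (again by Poincar\'e). Provided the scalar $\int_\Si S^{-1}(1)\dox$ is nonzero, this yields $|\bar\psi|\le C\|\nabla_T\psi\|_{L^2(\Si)}$ and completes the Poincar\'e estimate on $N(\Si)$. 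The non-vanishing of $\int_\Si S^{-1}(1)\dox$ I would justify by noting $S^{-1}(1) = [\partial_\vn u]$ where $u=\bS(S^{-1}(1))$ is the single-layer potential with boundary trace identically $1$ on $\Si$; since $u$ is harmonic in $Y\setminus\Si$ with jump-free trace $1$, it is the (nonconstant) equilibrium-type potential, and $\int_\Si[\partial_\vn u] = -\int_{Y\setminus\Si}\Delta u = 0$ would force $u$ constant, a contradiction---so actually one should be more careful, and I would instead fix the normalization by choosing the representative of $\xi$ modulo constants appropriately, observing that $N(\Si)$ is precisely a complement to the constants, so that $\|\nabla_T\cdot\|_{L^2(\Si)}$ is genuinely a norm on it and the quotient Poincar\'e inequality applies directly.

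Once boundedness of the functional is in hand, the Riesz representation theorem produces a unique $g\in N(\Si)$ with $\langle g,\psi\rangle_{N(\Si)} = -\int_\Si f\,\overline\psi\dox$ for all $\psi\in N(\Si)$, which is exactly the stated variational identity. The bound $\|\nabla_T g\|_{L^2(\Si)}\le C\|f\|_{L^2(\Si)}$ follows by testing with $\psi=g$: $\|\nabla_T g\|_{L^2(\Si)}^2 = -\int_\Si f\,\overline g\dox \le \|f\|_{L^2(\Si)}\|g\|_{L^2(\Si)} \le C\|f\|_{L^2(\Si)}\|\nabla_T g\|_{L^2(\Si)}$, and dividing through gives the claim. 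Finally I would remark that the hypothesis $\int_\Si f\dox = 0$ is the natural compatibility condition: testing the identity with $\psi\equiv\text{const}$ is vacuous since constants lie outside $N(\Si)$, but it is needed so that the solution operator $\Ds^{-1}$ defined this way is consistent with inverting $\Delta_T$ on its range; this is what lets us legitimately write $\Ds^{-1}S^{-1}\xi = \lambda^{-1}\xi$ in \eqref{eq:eigenvalue_problem}. The only genuinely delicate point, as noted, is the Poincar\'e inequality underpinning the boundedness of the functional, and the rest is a routine application of Riesz together with Cauchy--Schwarz.
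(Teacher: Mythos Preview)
Your approach via Riesz representation on $N(\Si)$ is exactly the paper's: the paper simply asserts that $N(\Si)$ with the norm $\|\nabla_T\,\cdot\,\|_{L^2(\Si)}$ is a Hilbert space and then invokes Riesz, without further detail. You go further by attempting to verify the underlying Poincar\'e inequality, and you correctly isolate the crux: one must rule out constants from $N(\Si)$, i.e., show $\int_\Si S^{-1}(1)\dox\neq 0$.

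That verification, however, does not close. The step ``$\int_\Si[\partial_{\vn} u]\dox=0$ would force $u$ constant'' is simply false, and the preceding identity $\int_\Si[\partial_{\vn} u]\dox=-\int_{Y\setminus\Si}\Delta u\dx=0$ already presumes that $\bS\gamma_0$ is harmonic off $\Si$, which for the \emph{periodic} Green's function requires $\int_\Si\gamma_0\dox=0$---the very thing in question. Your fallback (``$N(\Si)$ is precisely a complement to the constants'') merely restates the claim. The missing ingredient is the sign-definiteness of the single layer operator: $-S$ induces a positive bilinear form, so $\int_\Si S^{-1}(1)\dox=\langle\gamma_0,S\gamma_0\rangle_\Si\neq 0$ for $\gamma_0=S^{-1}(1)$. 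This is implicit in the paper---in Appendix~\ref{app:open} the pairing $\langle -S^{-1}\xi,\eta\rangle_\Si$ is adopted as an inner product, and the constant shift in the proof of Lemma~\ref{lem:PDE2a} divides by $\langle S^{-1}1,1\rangle_\Si$. With that granted, your bound on $|\bar\psi|$ and hence the Poincar\'e inequality go through, and the remainder of your argument (Riesz, then testing with $\psi=g$ for the norm bound) is correct and matches the paper.
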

We are now in a position to formulate and proof a central proposition and
corollary.
\begin{proposition}
  \label{prop:self-adjoint}
  The operator
  \begin{align*}
    \Ds^{-1} S^{-1}\;:\; N(\Si)\;\rightarrow\; N(\Si)
  \end{align*}
  is compact and self-adjoint. Moreover, $\text{ker}\,(\Ds^{-1} S^{-1}) =
  \left\{ 0 \right\}$.
\end{proposition}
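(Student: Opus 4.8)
The plan is to analyze the operator $\Ds^{-1}S^{-1} \colon N(\Si) \to N(\Si)$ directly through the Riesz-representation characterization of $\Ds^{-1}$ supplied by Lemma~\ref{lem:PDE2}, combined with the mapping properties of $S^{-1}$ from Property~(i) in Appendix~A\ref{sec:density}. First I would verify that the composition is well-defined as a map $N(\Si) \to N(\Si)$: for $\xi \in N(\Si)$ we have $S^{-1}\xi \in L^2(\Si)$, and the constraint $\int_\Si S^{-1}\xi \dox = 0$ built into the definition of $N(\Si)$ is precisely the solvability condition required by Lemma~\ref{lem:PDE2}, so $g := \Ds^{-1}S^{-1}\xi$ exists and lies in $N(\Si)$. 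Boundedness of the composition then follows from stacking the bound $\|\nabla_T g\|_{L^2(\Si)} \le C\|S^{-1}\xi\|_{L^2(\Si)}$ from Lemma~\ref{lem:PDE2} with the bounded invertibility of $S$.

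\textbf{Self-adjointness.} For $\xi,\zeta \in N(\Si)$, write $g = \Ds^{-1}S^{-1}\xi$ and $h = \Ds^{-1}S^{-1}\zeta$. The inner product on $N(\Si)$ is $\langle u,v\rangle = \int_\Si \nabla_T u \cdot \nabla_T \overline v \dox$. Using the defining variational identity of $\Ds^{-1}$ (Lemma~\ref{lem:PDE2}) with test function $h \in N(\Si)$ and right-hand side $f = S^{-1}\xi$, I get
\begin{align*}
  \langle \Ds^{-1}S^{-1}\xi,\; \zeta\rangle
  \;=\; \int_\Si \nabla_T g \cdot \nabla_T \overline{\zeta}\dox
  \;=\; \int_\Si \nabla_T g \cdot \nabla_T \overline{\Ds\,h}\dox,
\end{align*}
and then reapplying Lemma~\ref{lem:PDE2} the other way reduces everything to the symmetric bilinear form $\int_\Si (S^{-1}\xi)\,\overline{(S^{-1}\zeta)}\dox$ up to a sign; here one uses the self-adjointness of $S$ (hence of $S^{-1}$) on $L^2(\Si)$, which follows from the symmetry of the periodic Green's function $\Gper(\vx-\vy) = \Gper(\vy-\vx)$. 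Care is needed to handle the constant-constraint bookkeeping, i.e. that test functions must lie in $N(\Si)$ and that $S^{-1}\xi$ has zero mean; these are exactly compatible, so no extra terms appear.

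\textbf{Compactness.} This is the step I expect to be the main obstacle, and the natural approach is a compact-embedding argument. The operator factors as $N(\Si) \xrightarrow{S^{-1}} L^2(\Si) \xrightarrow{\Ds^{-1}} N(\Si)$, where $S^{-1}$ is merely bounded; compactness must come from $\Ds^{-1}$ gaining derivatives. Concretely, $\Ds^{-1}$ maps $L^2(\Si)$ into $N(\Si) \subset H^1(\Si)$ by Lemma~\ref{lem:PDE2}, so the composition maps the bounded set $S^{-1}(\text{unit ball of } N(\Si))$ into a bounded subset of $H^1(\Si)$, and I invoke the compact Rellich embedding $H^1(\Si) \hookrightarrow\hookrightarrow L^2(\Si)$. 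The remaining subtlety is that compactness is needed in the $N(\Si)$-topology (the $\|\nabla_T\cdot\|_{L^2}$ norm), not the $L^2$-topology; I would close this by using elliptic regularity/energy estimates for $\Ds^{-1}$ to upgrade $L^2$-convergence of a subsequence of inputs to $H^1 = N(\Si)$-norm convergence of the outputs, exploiting that $\Ds^{-1}$ is a bounded map from $L^2(\Si)$ to $N(\Si)$ applied to a strongly $L^2$-convergent sequence. Since $\Si$ is a compact smooth (or at least Lipschitz) closed surface this Rellich argument is standard.

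\textbf{Trivial kernel.} Finally, if $\Ds^{-1}S^{-1}\xi = 0$ then testing the variational identity of Lemma~\ref{lem:PDE2} against $\psi = \Ds^{-1}S^{-1}\xi = 0$ is uninformative, so instead I argue: $\Ds^{-1}(S^{-1}\xi) = 0 \in N(\Si)$ means $\int_\Si \nabla_T 0 \cdot \nabla_T\overline\psi\dox = -\int_\Si (S^{-1}\xi)\overline\psi\dox$ for all $\psi \in N(\Si)$, hence $S^{-1}\xi$ is $L^2$-orthogonal to all of $N(\Si)$; since $S^{-1}\xi$ itself has zero mean and $N(\Si)$ is dense in the zero-mean subspace of $L^2(\Si)$ (or one checks the orthogonal complement directly), this forces $S^{-1}\xi = 0$, and invertibility of $S^{-1}$ gives $\xi = 0$. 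This establishes $\ker(\Ds^{-1}S^{-1}) = \{0\}$ and completes the proof.
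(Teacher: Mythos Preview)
Your overall strategy is the paper's own --- Lemma~\ref{lem:PDE2} for the variational characterization of $\Ds^{-1}$, symmetry of $S^{-1}$ for self-adjointness, Rellich for compactness --- but two of the steps are garbled as written.

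In the self-adjointness computation, the equality $\int_\Si \nabla_T g \cdot \nabla_T \overline{\zeta}\dox = \int_\Si \nabla_T g \cdot \nabla_T \overline{\Ds h}\dox$ is false: since $h=\Ds^{-1}S^{-1}\zeta$ one has $\Ds h = S^{-1}\zeta$, not $\zeta$, so the expression does \emph{not} reduce to the bilinear form $\int_\Si (S^{-1}\xi)\,\overline{S^{-1}\zeta}\dox$ you claim. The correct chain (exactly the paper's) is to test the defining identity for $g=\Ds^{-1}S^{-1}\xi$ with $\psi=\zeta$, obtaining $\langle \Ds^{-1}S^{-1}\xi,\zeta\rangle_{N(\Si)} = -\int_\Si (S^{-1}\xi)\,\overline\zeta\dox$; then invoke self-adjointness of $S^{-1}$ to rewrite this as $-\int_\Si \xi\,\overline{S^{-1}\zeta}\dox$; and finally test the identity for $h$ with $\psi=\xi$ to recognize the result as $\langle \xi,\Ds^{-1}S^{-1}\zeta\rangle_{N(\Si)}$.

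In the compactness step your proposed ``upgrade'' does not go through: you cannot appeal to continuity of $\Ds^{-1}$ on a strongly $L^2$-convergent sequence of \emph{inputs}, because the inputs $S^{-1}g_i$ are merely bounded in $L^2(\Si)$, not convergent. What Rellich actually produces is $L^2$-convergence of a subsequence of the \emph{outputs} $u_{i_k}=\Ds^{-1}S^{-1}g_{i_k}$. The paper closes the loop with the energy identity itself,
\[
  \big\|\nabla_T(u_{i_k}-u_{i_l})\big\|_{L^2(\Si)}^2
  \;=\;-\int_\Si S^{-1}(g_{i_k}-g_{i_l})\,\overline{(u_{i_k}-u_{i_l})}\dox
  \;\le\; C\,\|u_{i_k}-u_{i_l}\|_{L^2(\Si)}\;\to\;0,
\]
using only boundedness of $S^{-1}(g_{i_k}-g_{i_l})$ in $L^2(\Si)$. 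This is presumably the ``energy estimate'' you gestured at, but the dependence has to run in this direction.
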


\begin{corollary}[Spectrum]
  \label{cor:spectrum}
  The spectrum of $\Ds^{-1} S^{-1}$ consists of countably many nonzero
  eigenvalues $\big\{\lambda_n^{-1}\big\}_{n}$, only possibly accumulating
  at $0$. The corresponding eigenfunctions $\big\{\xi_n\big\}$ form an
  orthonormal basis of $N(\Si)$.
\end{corollary}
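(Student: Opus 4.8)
The plan is to turn the manifestly non-symmetric composition $\Ds^{-1}S^{-1}$ into the bounded operator associated with a \emph{Hermitian} sesquilinear form on the Hilbert space $\big(N(\Si),\|\nabla_T\cdot\|_{L^2(\Si)}\big)$, and to extract compactness, self-adjointness, and injectivity from that form. The first task is to check that $\Ds^{-1}S^{-1}$ maps $N(\Si)$ boundedly into itself. The only non-cosmetic point is that on $N(\Si)$ the norm $\|\nabla_T\cdot\|_{L^2(\Si)}$ is equivalent to the one inherited from $H^1(\Si)$: $N(\Si)$ is a closed hyperplane of $H^1(\Si)$ — the kernel of the $H^1(\Si)$-continuous functional $\xi\mapsto\int_\Si S^{-1}\xi\dox$, continuity coming from Property~(i) — hence complete for $\|\cdot\|_{H^1(\Si)}$; it is also complete for $\|\nabla_T\cdot\|_{L^2(\Si)}$ by hypothesis; and the identity is continuous from the former norm to the latter, so the bounded inverse theorem yields a Poincar\'e-type estimate $\|\xi\|_{H^1(\Si)}\le C\,\|\nabla_T\xi\|_{L^2(\Si)}$. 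Consequently the inclusion $\iota\colon N(\Si)\hookrightarrow H^1(\Si)$ is bounded, $S^{-1}\colon H^1(\Si)\to L^2(\Si)$ is bounded (Property~(i)), $S^{-1}\xi$ has zero mean on $\Si$ when $\xi\in N(\Si)$, so Lemma~\ref{lem:PDE2} applies and $\Ds^{-1}S^{-1}\xi\in N(\Si)$; composing the three bounded maps gives boundedness of $\Ds^{-1}S^{-1}$ on $N(\Si)$.

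Testing the defining relation of $\Ds^{-1}$ in Lemma~\ref{lem:PDE2} with $f=S^{-1}\xi$ and test function $\zeta\in N(\Si)$ yields, for all $\xi,\zeta\in N(\Si)$, the identity
\begin{align*}
  \big\langle\Ds^{-1}S^{-1}\xi,\,\zeta\big\rangle_{N(\Si)}
  \;=\;
  \int_\Si\nabla_T\!\big(\Ds^{-1}S^{-1}\xi\big)\cdot\nabla_T\overline\zeta\dox
  \;=\;
  -\int_\Si\big(S^{-1}\xi\big)\,\overline\zeta\dox,
\end{align*}
so $\Ds^{-1}S^{-1}$ is the Riesz representative of the bounded form $a(\xi,\zeta):=-\langle S^{-1}\xi,\zeta\rangle_{L^2(\Si)}$. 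Self-adjointness is then immediate: the single layer operator $S$ is symmetric on $L^2(\Si)$ because its kernel $\Gper$ is real and even, $\Gper(\vx-\vy)=\Gper(\vy-\vx)$; invertibility of $S$ transfers this to $S^{-1}$, so $a$ is Hermitian and $\Ds^{-1}S^{-1}=(\Ds^{-1}S^{-1})^{\ast}$. For positivity and triviality of the kernel I would evaluate $a$ on the diagonal: put $\gamma:=S^{-1}\xi$ (zero mean by construction) and $u:=\bS\gamma$, which by \eqref{eq:single-layer-potential} and Lemma~\ref{lemma:searchchi} is $Y$-periodic, harmonic off $\Si$, continuous across $\Si$ with $[\partial_{\vn}u]=\gamma$; Green's identity on $Y\setminus\Si$, the $\partial Y$-contributions cancelling by periodicity, then gives
\begin{align*}
  a(\xi,\xi)\;=\;-\int_\Si\gamma\,\overline u\dox\;=\;-\int_\Si[\partial_{\vn}u]\,\overline u\dox\;=\;\int_{Y\setminus\Si}|\nabla u|^2\dx\;\ge\;0,
\end{align*}
with equality forcing $\nabla u\equiv0$ in $Y\setminus\Si$, hence $\gamma=[\partial_{\vn}u]=0$ and $\xi=S\gamma=0$; in particular $\ker(\Ds^{-1}S^{-1})=\{0\}$ and every eigenvalue is positive.

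For compactness I would make the compact Sobolev embedding explicit: by the Poincar\'e bound above and Rellich--Kondrachov on the compact surface $\Si$, the inclusion $\iota_2\colon N(\Si)\to L^2(\Si)$ is compact, hence so is its adjoint $\iota_2^{\ast}\colon L^2(\Si)\to N(\Si)$. Rewriting the displayed identity as $a(\xi,\zeta)=\big\langle\iota_2^{\ast}(-S^{-1}\iota\,\xi),\,\zeta\big\rangle_{N(\Si)}$ identifies $\Ds^{-1}S^{-1}=-\,\iota_2^{\ast}\circ S^{-1}\circ\iota$, a compact operator composed with two bounded ones, hence compact. (Equivalently, a weakly null sequence in $N(\Si)$ converges to $0$ in $L^2(\Si)$ while its image under the bounded map $S^{-1}$ stays $L^2$-bounded, so $a(\xi_k,\xi_k)\to0$, and positivity together with self-adjointness then forces $\Ds^{-1}S^{-1}\xi_k\to0$.) This proves the proposition. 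Corollary~\ref{cor:spectrum} is then the spectral theorem for compact self-adjoint operators on the infinite-dimensional space $N(\Si)$: it furnishes an orthonormal basis $\{\xi_n\}$ of eigenfunctions with real eigenvalues, injectivity rules out $0$ as an eigenvalue so the nonzero eigenvalues $\{\lambda_n^{-1}\}$ are infinitely many and accumulate only at $0$, and the positivity above gives $\lambda_n>0$; by construction $\xi_n$ is the restriction to $\Si$ of the eigenfunction $\varphi_n$ of \eqref{eq:spectral_problem}.

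The genuinely non-formal ingredient — the one the authors flag as ``symmetrizing a non-hermitian operator'' — is the choice of the space $N(\Si)$, equipped with the inner product $\int_\Si\nabla_T\cdot\nabla_T\overline{(\cdot)}\dox$ and the constraint $\int_\Si S^{-1}\xi\dox=0$: this constraint is precisely what makes $\|\nabla_T\cdot\|_{L^2(\Si)}$ a norm (it excludes the constants), hence what underlies both the Poincar\'e bound used for boundedness and the compactness of $N(\Si)\hookrightarrow L^2(\Si)$. Once that space is fixed the remaining steps are bookkeeping, the one thing to watch being that the $N(\Si)$-inner product and the $L^2(\Si)$-inner product are kept carefully apart — they genuinely differ, and this mismatch reappears in discrete form as the $1/\sqrt{\lambda_n^h}$ rescaling of the surface integrals mentioned in Section~\ref{sec:numerical}.
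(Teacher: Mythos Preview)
Your argument is correct and follows the same skeleton as the paper's proof of Proposition~\ref{prop:self-adjoint}: identify $\Ds^{-1}S^{-1}$ with the Riesz representative of the sesquilinear form $a(\xi,\zeta)=-\int_\Si(S^{-1}\xi)\,\overline\zeta\dox$ on $N(\Si)$, read off self-adjointness from the symmetry of $S^{-1}$, and extract compactness from Rellich on $\Si$. The differences are in packaging and in what you add. For compactness the paper argues by hand with sequences (bounded $g_i$, extract an $L^2$-convergent subsequence of $u_i=\Ds^{-1}S^{-1}g_i$, then upgrade to $N(\Si)$-convergence via the form identity); your factorization $\Ds^{-1}S^{-1}=-\iota_2^{\ast}\circ S^{-1}\circ\iota$ through the compact embedding $\iota_2$ is cleaner and makes the source of compactness more transparent. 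For the trivial kernel the paper simply invokes the invertibility of $S$ and $\Ds$ separately, while you derive it from the diagonal identity $a(\xi,\xi)=\int_{Y\setminus\Si}|\nabla\bS S^{-1}\xi|^2\dx$ obtained via Green's formula; this costs you an extra computation but buys positivity of the eigenvalues $\lambda_n^{-1}$, which the paper never states explicitly even though it uses $\lambda_n>0$ implicitly (e.g.\ in writing $\omega_{0,n}^2=\lambda_n\omega_p/\e$). Your Poincar\'e argument via the open mapping theorem is also more explicit than the paper's one-line ``straightforward calculation''. In short: same route, with your version supplying a few details the paper leaves to the reader and yielding the sign of the spectrum as a bonus.
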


\begin{proof}[Proof of Proposition~\ref{prop:self-adjoint}]
  $\Delta_T^{-1}S^{-1}$ is well defined and bounded by virtue of Assumption
  (i) and Lemma~\ref{lem:PDE2}. For any given $f,g\in N(\Si)$, it holds that
  \begin{multline*}
    -\int_{\Si} \big(\nabla_T \Ds^{-1} S^{-1} f\big)\cdot\nabla_T g\dox
    \,=\,
    \int_{\Si} \big(S^{-1} f\big) g\dox
    \,=\,
    \int_{\Si} f \big(S^{-1} g\big)\dox
    \\
    \,=\,
    -\int_{\Si} \nabla_T  f\cdot\big(\nabla_T \Ds^{-1} S^{-1} g\big)\dox
  \end{multline*}
  Therefore, $\Ds^{-1} S^{-1}$ is self adjoint.

  In order to establish compactness of $\Delta_T^{-1}S^{-1}$, we first fix
  a bounded sequence $g_i\in N(\Si)$. The image $u_i=\Ds^{-1}S^{-1}g_i$ is
  also a bounded sequence in $N(\Si)$. By Rellich's lemma, there exists
  subsequence $u_{i_k}$ that is convergent in $L^2(\Si)$. Furthermore, we
  have
  \begin{align*}
    \int_{\Si}\nabla_T u_i \cdot \nabla_T u_j \dox
    \,=\,
    - \int_{\Si}S^{-1}g_i \,u_j \dox.
  \end{align*}
  Thus $\nabla_T u_{i_k}$ converges componentwise in $L^2(\Si)$, which
  gives that $u_{i_k}$ converges in $N(\Si)$.

  The last statement follows immediately from the fact that $S$ and $\Ds$
  are bounded and invertible. Thus $\Ds^{-1}S^{-1}f \equiv 0$ immediately
  implies $f\equiv S\,\Ds0 = 0$.
\end{proof}

%%%%%%%%%%%%%%%%%%%%%%%%%%%%%%%%%%%%%%%%%%%%%%%%%%%%%%%%%%%%%%%%%%%%%%%%%%%%%%%%
\subsection{Proof of the spectral decomposition result}

\begin{proof}[Proof of \eqref{eq:effective_perm_new}]
  Let $\chi_j$ be the solution of \eqref{eq:cell_problem}. According to
  Lemma~\ref{lemma:searchchi} we can write $\chi_j$ as a single layer
  potential with a density $\gamma\in L^2(\Si)$ that satisfy
  $\int_{\Si}\gamma\dox =0$, viz.,
  \begin{align*}
    \chi_j \,=\, \bS\gamma \,+\, C.
  \end{align*}
  Using the invertibility of $S$ we obtain that for $\xi = S\gamma \in N(\Si)$,
  \begin{align*}
    \chi_j \,=\, \bS\,S^{-1}\xi \,+\, C.
  \end{align*}
  Corollary~\ref{cor:spectrum} guarantees the existence of the 
  expansion $\xi=\sum_k\alpha_j^k\xi_k$ with
  $\{\alpha_j^n\}_n\in\ell^2(\C)$, which yields (up to a constant):
  \begin{align*}
    \chi_j = \bS S^{-1}\Big(\sum_{k}\alpha^k_j \xi_k\Big).
  \end{align*}
  Identity \eqref{eq:coefficients} follows directly from substituting this
  expansion into \eqref{eq:cell_problem} and testing with $\psi=\bS
  S^{-1}\xi_k$:
  \begin{align*}
    \eta(\omega) \int_{\Si} P_T\big(\vec e_j) \cdot
    \nabla_T\overline{\xi_k} \dox
    \;
    &=\;\sum_{n}( \varepsilon \alpha_j^n / \lambda_n - \alpha_j^n
    \eta(\omega)) \int_\Sigma\nabla_T \xi_n \cdot\nabla_T\overline{\xi_k}
    \dox
    \\
    &=\;\sum_{n}( \varepsilon \alpha_j^n / \lambda_n - \alpha_j^n
    \eta(\omega))\,\delta_{kn}
    \\
    &=\;\varepsilon \alpha_j^k/\lambda_k - \alpha_j^k \eta(\omega).
  \end{align*}
  Finally, Identity \eqref{eq:effective_perm_new} follows from a similar
  substitution using Eqs.~\eqref{eq:coefficients} and
  \eqref{eq:intro:effective_perm}.
\end{proof}

%%%%%%%%%%%%%%%%%%%%%%%%%%%%%%%%%%%%%%%%%%%%%%%%%%%%%%%%%%%%%%%%%%%%%%%%%%%%%%%%

\section{Spectral decomposition on open surfaces}
\label{app:open}

When $\Si$ is an open surface, in the sense that $\Si$ has edges in the
interior of $Y$, the property in Sec. \ref{sec:spectral}\ref{sec:density},
$S\;:\;L^2(\Si)\to H^1(\Si)$ is invertible, no longer holds. A
counter-example is the fact that in two dimensional space, the non-periodic
single layer potential maps $\frac{1}{\sqrt{a-x^2}}$ to a constant function
on the interval $[-a,a]$ \cite{YanSloan1988}. This means that we cannot
write $\chi_j = \bS \gamma +C$ for some $\gamma\in L^2(\Si)$. However, this
representation is valid for $\gamma$ defined in a proper \emph{fractional}
Sobolev space. Thus, modifying the argument to fractional Sobolev spaces
makes it possible to obtain the same expansion
\eqref{eq:effective_perm_new}.

In this appendix, we collect all necessary modifications to the argument
outlined in Appendix \ref{app:closed}, provided that the mild assumptions hold true that 
$\Si$ has smooth boundary and  $\Si$ can be completed into a closed smooth surface
$\Silarge$.

\subsection{Sobolev spaces on open surfaces}
\label{app:sobolev}

We give a definition of Sobolev spaces defined on open surfaces following
the notations in \cite{McLean2000}. First, on a closed $C^{k,1}$ surface
$\Silarge$ in $\R^n$, where $k\geq0$ and $n>0$ are integers,
$H^s(\Silarge)$ is defined through charts and the Fourier transform for
$s\in[-k-1,k+1]$ \cite[P.98]{McLean2000}.

Let $\Si$ be an open subset of $\Silarge$, and, for simplicity, assume the
boundary of $\Si$ is smooth. For every real number $s\in\R$, we define
\begin{align*}
  H^s(\Si) &:= \left\{f: \Si \rightarrow \C \,|\, f \text{ has an
  extension }\tilde f \in H^s(\Silarge) \right\},
  \\
  \tilde H^s(\Si) &:= \text{ closure of } C^\infty_0(\Si) \text{ in }
  H^s(\Silarge).
\end{align*}
It is shown in  \cite[Thm.\,3.14, Thm.\,3.29, Thm.\,3.30]{McLean2000} that
when $\Si$ is a Lipschitz subset of $\Silarge$, for all $s\in\R$,
\begin{align*}
  &(\tilde H^s(\Si))'  = H^{-s}(\Si),
  \\
  &( H^s(\Si))' = \tilde H^{-s}(\Si),\\
  &\tilde H^{s}(\Si) =  \left\{f\in H^{s}(\Silarge)\,|\, \text{supp} f\subset
  \overline{\Si} \right\},
\end{align*}
and for  an integer  $m\in[0, k+1]$,
\begin{align*}
  H^m(\Si) &= \left\{ f:\Si \rightarrow \C \,|\, f \text{
  and its weak tangential derivatives up to order } m \text{ are in }
L^2(\Si) \right\}.
\end{align*}
Note that the above defined $H^s(\Si)$ and $\tilde H^{-s}(\Si)$ for $s\geq
0 $ are the same as those defined in \cite{Wendland1979,Stephan1987}.

%%%%%%%%%%%%%%%%%%%%%%%%%%%%%%%%%%%%%%%%%%%%%%%%%%%%%%%%%%%%%%%%%%%%%%%%%%%%%%%%
%%%%%%%%%%%%%%%%%%%%%%%%%%%%%%%%%%%%%%%%%%%%%%%%%%%%%%%%%%%%%%%%%%%%%%%%%%%%%%%%
%%%%%%%%%%%%%%%%%%%%%%%%%%%%%%%%%%%%%%%%%%%%%%%%%%%%%%%%%%%%%%%%%%%%%%%%%%%%%%%%

\subsection{Spectral decomposition}

We can now modify the argument in Appendix~\ref{app:closed} as follows.
Since $\chi_j$ belongs to $H^1(Y)$, its distributional Laplacian is $0$,
and $[\partial_n\chi_j]=0$ on $\Silarge\setminus\Si$, we obtain the
standard result that
\begin{lemma}
  For the corrector $\chi_j$ solving \eqref{eq:cell_problem}, there exists a
  unique $\gamma \in \tilde H^{-1/2}(\Si,\C)$ and a unique constant $C$, such
  that
  \vspace{-1em}
  \begin{align*}
    \chi_j = \bS \gamma+C.
  \end{align*}
  This $\gamma$ satisfies that $\int_{\Si}\gamma\dox=0$.
\end{lemma}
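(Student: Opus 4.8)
The strategy mirrors the closed-surface argument in Appendix~\ref{app:closed}, but replaces the integer-order space $L^2(\Si)$ by the fractional space $\tilde H^{-1/2}(\Si)$, which is the correct space for normal-derivative jumps on an open surface. First I would observe that, since $\chi_j\in H^1_{\text{per}}(Y,\C)$ solves the cell problem \eqref{eq:cell_problem}, its distributional Laplacian vanishes in $Y\setminus\Silarge$ (where $\Silarge$ is the closed completion of $\Si$). The jump $[\pa_{\vn}\chi_j]$ is therefore a well-defined distribution supported on $\overline\Si$; by the trace theory for $H^1$ functions across a Lipschitz hypersurface (see \cite{McLean2000}) it lies in $H^{-1/2}(\Silarge)$, and since it is supported on $\overline\Si$ and $[\pa_{\vn}\chi_j]=0$ on $\Silarge\setminus\Si$, the characterization $\tilde H^{-1/2}(\Si)=\{f\in H^{-1/2}(\Silarge): \text{supp}\,f\subset\overline\Si\}$ from Appendix~\ref{app:sobolev} places it in $\tilde H^{-1/2}(\Si)$. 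Set $\gamma:=[\pa_{\vn}\chi_j]\in\tilde H^{-1/2}(\Si,\C)$.

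Next, I would verify that $\chi_j-\bS\gamma$ is harmonic in all of $Y$: by \eqref{eq:single-layer-potential} the single layer potential $\bS\gamma$ has the same normal-derivative jump $\gamma$ across $\Silarge$ and no jump in its value, so the difference $\chi_j-\bS\gamma$ has zero value-jump and zero normal-derivative-jump on $\Silarge$, hence extends to a function in $H^1_{\text{per}}(Y,\C)$ whose distributional Laplacian is zero everywhere in $Y$. A periodic harmonic function on $Y$ is constant, so $\chi_j=\bS\gamma+C$ for some constant $C\in\C$. The mean-zero property $\int_\Si\gamma\dox=0$ then follows exactly as in the closed case \eqref{eq:meanzero}: $\int_\Si\gamma\dox=\int_\Si[\pa_{\vn}\bS\gamma]\dox=-\int_{Y\setminus\Silarge}\Delta\bS\gamma\dx=0$, where the pairing $\int_\Si\gamma\dox$ is interpreted as the duality pairing of $\gamma\in\tilde H^{-1/2}(\Si)$ with the constant function $1\in H^{1/2}(\Si)$.

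For uniqueness, suppose $\bS\gamma_1+C_1=\bS\gamma_2+C_2$. Then $\bS(\gamma_1-\gamma_2)$ is constant on $Y$, so its normal-derivative jump vanishes; but that jump equals $\gamma_1-\gamma_2$ by \eqref{eq:single-layer-potential}, forcing $\gamma_1=\gamma_2$ in $\tilde H^{-1/2}(\Si)$, and then $C_1=C_2$. The main obstacle, compared with the closed case, is purely the functional-analytic bookkeeping: one must take care that $\bS$ maps $\tilde H^{-1/2}(\Si)$ boundedly into $H^1_{\text{per}}(Y)$ and that the jump relations \eqref{eq:single-layer-potential}, originally stated for $L^2$ densities, persist by density and continuity for densities in $\tilde H^{-1/2}(\Si)$—here the duality $(\tilde H^{-1/2}(\Si))'=H^{1/2}(\Si)$ and the mapping properties of the single layer operator on Lipschitz (here smooth) subsurfaces recorded in \cite{McLean2000} are exactly what is needed. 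Once these standard mapping properties are in hand the argument is routine, and the counterexample cited (the density $1/\sqrt{a-x^2}$ mapping to a constant) confirms that $\tilde H^{-1/2}$, rather than $L^2$, is the unavoidable choice.
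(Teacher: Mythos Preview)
Your proposal is correct and follows precisely the approach the paper takes: the paper itself gives only the one-sentence justification ``Since $\chi_j$ belongs to $H^1(Y)$, its distributional Laplacian is $0$, and $[\partial_n\chi_j]=0$ on $\Silarge\setminus\Si$, we obtain the standard result,'' and your argument fills in exactly those details---identifying $\gamma=[\partial_{\vn}\chi_j]\in\tilde H^{-1/2}(\Si)$ via the support characterization of $\tilde H^{-1/2}$, subtracting $\bS\gamma$ to obtain a periodic harmonic function, and then invoking the same uniqueness and mean-zero computations as in Lemma~\ref{lemma:searchchi}.
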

The mapping property of $S$ on $\tilde H^{-1/2}(\Si)$ is given by
\begin{lemma}[\cite{Wendland1979,Stephan1987}]
  \label{lem:inverseopen}
  The single layer operator $S: \tilde H^{-1/2}(\Si)
  \rightarrow H^{1/2}(\Si)$ is bijective.
\end{lemma}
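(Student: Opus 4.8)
The plan is to reduce the claim to the mapping properties of the non-periodic single layer operator, for which bijectivity $\widetilde H^{-1/2}(\Si)\to H^{1/2}(\Si)$ is the classical result of Wendland and Stephan \cite{Wendland1979,Stephan1987} for open arcs and screens with smooth boundary. First I would write $\Gper = G_0 + R$, where $G_0(\vx)=-1/(4\pi|\vx|)$ is the full-space fundamental solution and $R(\vx)=\sum_{\vz\neq 0}G_0(\vz+\vx)$ is the correction making the kernel periodic. Because $\Si\subset Y$ is compactly contained in a single period cell, $R$ is smooth in a neighborhood of $\Si-\Si$; hence the operator $\gamma\mapsto \int_\Si R(\vx-\vy)\gamma(\vy)\doy$ is smoothing, and in particular compact as a map $\widetilde H^{-1/2}(\Si)\to H^{1/2}(\Si)$. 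Thus $S = S_0 + K$ with $S_0$ the single layer operator for $G_0$ restricted to $\Si$ and $K$ compact.

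The main work is then: (a) $S_0:\widetilde H^{-1/2}(\Si)\to H^{1/2}(\Si)$ is bounded and coercive, hence bijective; (b) injectivity of the periodic operator $S$. For (a), boundedness is standard pseudodifferential-operator mapping theory on the closed surface $\Silarge$ combined with the definitions of $\widetilde H^{-1/2}(\Si)$ and $H^{1/2}(\Si)$ by restriction/extension from Appendix~\ref{app:sobolev}; coercivity, $\langle S_0\gamma,\gamma\rangle \gtrsim \|\gamma\|_{\widetilde H^{-1/2}(\Si)}^2$, follows from the positivity of the single layer energy $\int_{\R^3}|\nabla \bS_0\gamma|^2\dx$ together with the trace characterization of $\widetilde H^{-1/2}$ as the dual of $H^{1/2}(\Si)$ (this is exactly the Wendland–Stephan argument, valid under the smooth-boundary assumption on $\partial\Si$). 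By Lax–Milgram, $S_0$ is an isomorphism. For (b), suppose $S\gamma = 0$ in $H^{1/2}(\Si)$. Then $u:=\bS\gamma$ is $Y$-periodic, harmonic in $Y\setminus\Si$, with zero trace on $\Si$ (since $\js{\bS\gamma}=0$ always and the one-sided traces agree with the restriction to $\Si$), so $u$ vanishes identically by the maximum principle / uniqueness for the periodic Laplace problem with zero mean; consequently $\gamma = \js{\vn\cdot\nabla\bS\gamma}=0$. So $S$ is injective.

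To conclude, $S = S_0 + K$ is injective and is a compact perturbation of the isomorphism $S_0$; by the Fredholm alternative, $S$ is bijective from $\widetilde H^{-1/2}(\Si)$ onto $H^{1/2}(\Si)$, with bounded inverse by the open mapping theorem. The step I expect to be the genuine obstacle is the coercivity estimate (a): one must be careful that the energy norm of the single layer potential controls the full $\widetilde H^{-1/2}(\Si)$ norm and not merely the $H^{-1/2}(\Si)$ norm — this is precisely where the distinction between $\widetilde H^{-1/2}$ and $H^{-1/2}$ matters for open surfaces, and where the smoothness of $\partial\Si$ is used. Since this estimate is exactly the content of \cite{Wendland1979,Stephan1987}, I would cite it rather than reprove it, and only supply the perturbation argument and the periodic-injectivity step in detail.
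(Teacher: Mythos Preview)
The paper does not supply a proof for this lemma; it is stated purely as a citation of \cite{Wendland1979,Stephan1987}. Your proposal actually goes further than the paper: you correctly observe that those references concern the \emph{free-space} single layer operator on an open screen, whereas the $S$ appearing here is built from the \emph{periodic} kernel $\Gper$, and you supply the bridging argument the paper leaves implicit. The decomposition $S=S_0+K$ with $K$ compact (since $\Gper-G_0$ is smooth on a neighborhood of $\Si-\Si$ when $\Si$ sits in the interior of one period cell), the Wendland--Stephan isomorphism for $S_0:\tilde H^{-1/2}(\Si)\to H^{1/2}(\Si)$, and the Fredholm alternative together form the standard and correct route.

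One refinement to your injectivity step: for an \emph{open} screen the complement $Y\setminus\Si$ is connected, so invoking the maximum principle as you phrase it is awkward (there is no enclosed region with $\Si$ as boundary). The energy identity is cleaner: if $S\gamma=0$ then, using \eqref{eq:single-layer-potential} and periodicity,
\[
\int_Y|\nabla\bS\gamma|^2\dx \;=\; \int_\Si \js{\partial_{\vn}\bS\gamma}\,\overline{\bS\gamma}\dox \;=\; \langle\gamma,\overline{S\gamma}\rangle_\Si \;=\; 0,
\]
so $\bS\gamma$ is constant, hence zero (its trace on $\Si$ vanishes), and therefore $\gamma=\js{\partial_{\vn}\bS\gamma}=0$. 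With that adjustment your argument is complete and in fact more careful than what the paper records.
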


The proper Hilbert space to consider becomes
\begin{align}\label{eq:orthdecompopen}
  \mathcal{N}(\Si)
  \;:=\; \left\{ f\in H^{1/2}(\Si),\;\langle S^{-1}f, 1\rangle_{\Si} = 0
  \right\},
\end{align}
equipped with the inner product $\langle -S^{-1}\xi,\,\eta
\rangle_{\Si}$. Here, $\langle\cdot,\,\cdot\rangle_{\Si}$ is the $L^2(\Si)$
pairing, and we'll refer to $\langle -S^{-1}\xi,\,\eta \rangle_{\Si}$ as
the $S^{-1}$ inner product.

On this space, we consider the following inverse of $\Ds$.
\begin{lemma}[A particular inverse of $\Ds$]
  \label{lem:PDE2a}
  For $f \in \tilde H^{-1}(\Si)$ with $\langle f,\,1\rangle_{\Si}=0$, there
  exists a unique $g\in H^1(\Si)$ with $\langle S^{-1}g,\,1 \rangle_{\Si} =
  0$, such that
  \begin{align}\label{eq:PDE2}
   - \langle f,\,\psi \rangle_{\Si}
    \;=\;
    \int_{\Si}\nabla_T g \cdot\nabla_T\overline\psi\dox , \quad
    \text{for all }
    \psi\in  H^1(\Si).
  \end{align}
  Moreover, the solution $g$ of \eqref{eq:PDE2} is bounded,
  $\|g\|_{H^1(\Si)} \leq C \|f\|_{H^{-1}(\Si)}$. We will denote this
  solution operator by $\Ds^{-1}$.
\end{lemma}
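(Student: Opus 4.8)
The plan is to realize $\Ds^{-1}$ as the solution operator of a coercive variational problem via the Riesz representation theorem, in exact parallel with the closed-surface Lemma~\ref{lem:PDE2}, but with $N(\Si)$ replaced by an $H^1$-regular version of the fractional space $\mathcal N(\Si)$ from \eqref{eq:orthdecompopen}. Concretely, I would work on
\[
  W \;:=\; \big\{\, g\in H^1(\Si) \;:\; \langle S^{-1}g,\,1\rangle_\Si = 0 \,\big\}.
\]
First I would check that $W$ is a well-defined closed subspace of $H^1(\Si)$: the Sobolev embedding $H^1(\Si)\hookrightarrow H^{1/2}(\Si)$, the boundedness of $S^{-1}\colon H^{1/2}(\Si)\to\tilde H^{-1/2}(\Si)$ (Lemma~\ref{lem:inverseopen} together with the bounded inverse theorem), and the fact that $1\in H^{1/2}(\Si)=(\tilde H^{-1/2}(\Si))'$ since $\Si$ is bounded, together show that $g\mapsto\langle S^{-1}g,1\rangle_\Si$ is a bounded linear functional on $H^1(\Si)$ that does not vanish at the constant function; hence $W$ is closed of codimension one and $H^1(\Si)=W\oplus\C\cdot 1$.

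The core of the argument --- and the step I expect to be the main obstacle --- is a Poincar\'e-type inequality on $W$, namely $\|g\|_{L^2(\Si)}\le C\,\|\nabla_T g\|_{L^2(\Si)}$ for all $g\in W$, so that $\|\nabla_T\cdot\|_{L^2(\Si)}$ becomes an equivalent norm on $W$ and $(W,\langle\nabla_T\cdot,\nabla_T\cdot\rangle_{L^2(\Si)})$ a Hilbert space. I would argue by contradiction and compactness: a normalized sequence $g_k\in W$ with $\|g_k\|_{L^2(\Si)}=1$ and $\|\nabla_T g_k\|_{L^2(\Si)}\to 0$ is precompact in $L^2(\Si)$ by Rellich's lemma on the compact surface $\Si$, hence Cauchy in $H^1(\Si)$ along a subsequence and convergent there to some $g$ with $\|g\|_{L^2(\Si)}=1$ and $\nabla_T g=0$, i.e.\ a nonzero constant (using connectedness of $\Si$; otherwise one works component-wise, imposing one constraint per component). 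Since $W$ is closed, $g\in W$, so $\langle S^{-1}g,1\rangle_\Si=0$; but for $\mu:=S^{-1}g\neq 0$ the strict positivity of the single layer operator --- $-\langle S\mu,\mu\rangle_\Si=\|\nabla(\bS\mu)\|_{L^2(Y)}^2>0$ for $\mu\neq0$, which is precisely what makes $\langle-S^{-1}\cdot,\cdot\rangle_\Si$ an inner product --- gives $\langle S\mu,\mu\rangle_\Si\neq0$, whence (as $g$ is a nonzero constant) $\langle S^{-1}g,1\rangle_\Si\neq0$, a contradiction. The genuinely delicate point is exactly this positivity fact, equivalently that the constant function does not lie in $\mathcal N(\Si)$; the rest is routine.

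Given the variational setup, the conclusion follows quickly. Since $(\tilde H^{-1}(\Si))'=H^1(\Si)$, the assignment $\psi\mapsto-\langle f,\psi\rangle_\Si$ is a bounded antilinear functional on $H^1(\Si)$; and because $\langle f,1\rangle_\Si=0$ by hypothesis while $\nabla_T$ annihilates constants, decomposing $\psi=\psi_W+c\cdot1\in W\oplus\C\cdot1$ shows that \eqref{eq:PDE2} tested against all $\psi\in H^1(\Si)$ is equivalent to the same identity tested only against $\psi\in W$. On $W$ the Poincar\'e inequality gives $|\langle f,\psi\rangle_\Si|\le\|f\|_{\tilde H^{-1}(\Si)}\|\psi\|_{H^1(\Si)}\le C\|f\|_{\tilde H^{-1}(\Si)}\|\nabla_T\psi\|_{L^2(\Si)}$, so the functional is bounded for the $W$-inner product, and the Riesz representation theorem supplies a unique $g\in W$ with $\langle\nabla_T g,\nabla_T\psi\rangle_{L^2(\Si)}=-\langle f,\psi\rangle_\Si$ for all $\psi\in W$, hence for all $\psi\in H^1(\Si)$, together with $\|\nabla_T g\|_{L^2(\Si)}\le C\|f\|_{\tilde H^{-1}(\Si)}$; a final application of the Poincar\'e inequality then yields the asserted bound on $\|g\|_{H^1(\Si)}$. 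Uniqueness within $\{g\in H^1(\Si):\langle S^{-1}g,1\rangle_\Si=0\}=W$ is immediate: the difference of two solutions lies in $W$ and has vanishing tangential gradient (test against itself), hence is zero by Poincar\'e. The resulting linear map $f\mapsto g$ is the operator denoted $\Ds^{-1}$.
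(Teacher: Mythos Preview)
Your argument is correct, but it is organized differently from the paper's. The paper does not set up a Poincar\'e inequality on $W=\{g\in H^1(\Si):\langle S^{-1}g,1\rangle_\Si=0\}$ at all. Instead it first solves the \emph{standard} Neumann problem: by off-the-shelf elliptic theory there is a unique $\tilde g\in H^1(\Si)$ with $\int_\Si\tilde g=0$ satisfying \eqref{eq:PDE2}, together with the bound $\|\tilde g\|_{H^1}\le C\|f\|_{\tilde H^{-1}}$. It then simply shifts by the constant
\[
  C(\tilde g)\;=\;\frac{\langle S^{-1}\tilde g,1\rangle_\Si}{\langle S^{-1}1,1\rangle_\Si},
\]
so that $g:=\tilde g - C(\tilde g)$ still solves \eqref{eq:PDE2} (constants are killed by $\nabla_T$) and now obeys $\langle S^{-1}g,1\rangle_\Si=0$; the $H^1$ bound on $g$ follows by bounding $|C(\tilde g)|$ via $S^{-1}:H^{1/2}\to\tilde H^{-1/2}$.

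The two routes rely on the same nontrivial fact, namely $\langle S^{-1}1,1\rangle_\Si\neq 0$ (equivalently, the strict positivity of $-S$), which you isolate carefully in your compactness contradiction and which the paper uses tacitly when it divides by $\langle S^{-1}1,1\rangle_\Si$. What the paper's approach buys is brevity: it piggybacks on the classical mean-zero Poincar\'e inequality rather than reproving one for the nonstandard constraint. What your approach buys is that it lives from the start in the ``right'' space $W$, so existence, uniqueness, and the bound drop out of a single Riesz application, and the role of the single-layer positivity is made explicit rather than hidden in a denominator.
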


\begin{proof}
{%
  Given $f \in \tilde H^{-1}(\Si)$ with $\langle f ,1\rangle_{\Si} = 0$, it
  follows from standard elliptic equation theory that there exists a unique
  $\tilde g\in H^1(\Si)$ with $\langle \tilde g ,1\rangle_{\Si} = 0$, such
  that
  \begin{align*}
   - \int_{\Si}f \overline\psi\dox
    \;&=\;
    \int_{\Si}\nabla_T \tilde g \cdot\nabla_T\overline\psi\dox
    \quad \forall \psi\in H^{1}(\Si),
    \\[0.5em]
    \|\tilde g\|_{H^1(\Si)} \;&\leq\; C \|f\|_{\tilde H^{-1}(\Si)}.
  \end{align*}
  Now let $g_0 = S^{-1} 1 \in \tilde H^{-1/2}(\Si)$ and define the constant
  \begin{align*}
    C(\tilde g)\;:=\;
    \langle S^{-1} \tilde g,1\rangle_{\Si} / \langle S^{-1} 1, 1
    \rangle_{\Si} = \langle \tilde g,g_0\rangle_{\Si} / \langle  1, g_0
    \rangle_{\Si}.
  \end{align*}
  The function $g\,:=\,\tilde g - C(\tilde g)$ obviously solves
  \eqref{eq:PDE2} and by construction $\langle S^{-1} g ,1\rangle_{\Si} =
  0$.} The bound follows from
  \begin{align*}
    \|C(\tilde g)\|_{H^1} = \|C(\tilde g)\|_{L^2} \leq C | \langle S^{-1}
    \tilde g,1\rangle|
    \leq C \|S^{-1} \tilde g\|_{L^2} \leq C \|\tilde g\|_{H^1} \leq C
    \|f\|_{\tilde H^{-1}} .
  \end{align*}
\end{proof}

Since  $\Ds^{-1} S^{-1}$ maps $N(\Si) \subset H^{1/2}(\Si)$ into
$H^{1}(\Si)\subset\subset  H^{1/2}(\Si)$, we can verify:
\begin{proposition}
  The operator
  \begin{align*}
    \Ds^{-1} S^{-1}\;:\; \mathcal{N}(\Si) \;\rightarrow\; \mathcal{N}(\Si)
  \end{align*}
  is compact and self-adjoint with respect to the $S^{-1}$ pairing. Here
  $\Ds^{-1}$ is the particular operator defined in Lemma \ref{lem:PDE2a}.
  Moreover,
  \begin{align*}
    \text{ker}\,(\Ds^{-1} S^{-1}) = \left\{ 0 \right\}.
  \end{align*}
\end{proposition}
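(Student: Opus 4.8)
The plan is to transcribe the proof of Proposition~\ref{prop:self-adjoint} into the fractional-order setting, using the bijectivity of $S:\tilde H^{-1/2}(\Si)\to H^{1/2}(\Si)$ from Lemma~\ref{lem:inverseopen} in place of the $L^2$-invertibility of $S$, the solution operator $\Ds^{-1}$ from Lemma~\ref{lem:PDE2a}, and the classical coercivity of the single-layer boundary integral operator, which ensures that $\langle -S^{-1}\xi,\eta\rangle_\Si$ is an inner product on $\mathcal N(\Si)$ whose norm is equivalent to $\|\cdot\|_{H^{1/2}(\Si)}$. First I would check that $\Ds^{-1}S^{-1}$ maps $\mathcal N(\Si)$ into itself: for $\xi\in\mathcal N(\Si)\subset H^{1/2}(\Si)$ we have $S^{-1}\xi\in\tilde H^{-1/2}(\Si)\subset\tilde H^{-1}(\Si)$ with $\langle S^{-1}\xi,1\rangle_\Si=0$ by the definition of $\mathcal N(\Si)$, so Lemma~\ref{lem:PDE2a} produces $u:=\Ds^{-1}S^{-1}\xi\in H^1(\Si)\subset H^{1/2}(\Si)$ with $\langle S^{-1}u,1\rangle_\Si=0$, that is, $u\in\mathcal N(\Si)$.

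Next, for self-adjointness I would fix $\xi,\eta\in\mathcal N(\Si)$, set $u=\Ds^{-1}S^{-1}\xi$ and $v=\Ds^{-1}S^{-1}\eta$, and evaluate the $S^{-1}$ pairing $\langle\cdot,\cdot\rangle_{S^{-1}}:=\langle -S^{-1}\cdot,\cdot\rangle_\Si$ in two ways. Since $S:\tilde H^{-1/2}(\Si)\to H^{1/2}(\Si)$ is bijective with symmetric kernel $\Gper(\vx-\vy)=\Gper(\vy-\vx)$, writing $\eta=S\,S^{-1}\eta$ gives $\langle -S^{-1}u,\eta\rangle_\Si=\langle -u,S^{-1}\eta\rangle_\Si$, and since $v=\Ds^{-1}S^{-1}\eta$ the variational identity \eqref{eq:PDE2} for $v$ tested against $u\in H^1(\Si)$ turns this into $\int_\Si\nabla_T u\cdot\nabla_T\overline v\dox$; on the other hand, \eqref{eq:PDE2} for $u$ tested against $v$ gives immediately $\langle -S^{-1}\xi,v\rangle_\Si=\int_\Si\nabla_T u\cdot\nabla_T\overline v\dox$. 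Hence
\[
  \langle \Ds^{-1}S^{-1}\xi,\,\eta\rangle_{S^{-1}}
  \;=\;\int_\Si\nabla_T u\cdot\nabla_T\overline v\dox
  \;=\;\langle \xi,\,\Ds^{-1}S^{-1}\eta\rangle_{S^{-1}},
\]
which is exactly self-adjointness with respect to the $S^{-1}$ pairing, and mirrors the computation in the proof of Proposition~\ref{prop:self-adjoint}.

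For compactness I would factor $\Ds^{-1}S^{-1}$ through the more regular space $H^1(\Si)$. Combining Lemma~\ref{lem:PDE2a} with the bounded inclusion $\tilde H^{-1/2}(\Si)\subset\tilde H^{-1}(\Si)$, the boundedness of $S^{-1}:H^{1/2}(\Si)\to\tilde H^{-1/2}(\Si)$ (Lemma~\ref{lem:inverseopen} and the open mapping theorem), and the norm equivalence on $\mathcal N(\Si)$ yields
\[
  \|\Ds^{-1}S^{-1}\xi\|_{H^1(\Si)}\;\le\;C\,\|S^{-1}\xi\|_{\tilde H^{-1}(\Si)}\;\le\;C\,\|S^{-1}\xi\|_{\tilde H^{-1/2}(\Si)}\;\le\;C\,\|\xi\|_{H^{1/2}(\Si)}\;\le\;C\,\|\xi\|_{\mathcal N(\Si)},
\]
so $\Ds^{-1}S^{-1}$ is bounded from $\mathcal N(\Si)$ into $H^1(\Si)\cap\mathcal N(\Si)$; since $\Si$ has finite area, $H^1(\Si)$ embeds compactly into $H^{1/2}(\Si)$, whose norm is equivalent to that of $\mathcal N(\Si)$, and hence $\Ds^{-1}S^{-1}:\mathcal N(\Si)\to\mathcal N(\Si)$ is compact. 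Finally, if $\Ds^{-1}S^{-1}\xi=0$ then $S^{-1}\xi=0$, because $\Ds^{-1}$ is injective (taking $g=0$ in \eqref{eq:PDE2} forces $\langle S^{-1}\xi,\psi\rangle_\Si=0$ for all $\psi\in H^1(\Si)$, hence $S^{-1}\xi=0$ since $(H^1(\Si))'=\tilde H^{-1}(\Si)$), and then $\xi=0$ since $S$ is injective; thus $\text{ker}\,(\Ds^{-1}S^{-1})=\{0\}$.

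I expect the main obstacle to lie not in any single estimate but in the careful bookkeeping of the fractional duality pairings: one must justify the symmetry $\langle S\alpha,\beta\rangle_\Si=\langle\alpha,S\beta\rangle_\Si$ on $\tilde H^{-1/2}(\Si)$ from the symmetric kernel of $\bS$, and, more substantially, establish that $\langle -S^{-1}\xi,\xi\rangle_\Si$ is a norm on $\mathcal N(\Si)$ equivalent to $\|\xi\|_{H^{1/2}(\Si)}$. The latter rests on the coercivity (G\aa rding inequality) of the single-layer boundary integral operator on $\tilde H^{-1/2}(\Si)$, which is classical for the Laplacian and inherited by the periodic kernel $\Gper$ since $\Gper-G_0$ is smooth. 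Once these structural facts are secured, the remaining steps are routine adaptations of the closed-surface argument in Appendix~\ref{app:closed}.
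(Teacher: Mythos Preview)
Your proposal is correct and follows essentially the same approach as the paper. The paper does not give a full proof of this proposition; it only records the one-line observation ``$\Ds^{-1} S^{-1}$ maps $\mathcal N(\Si)\subset H^{1/2}(\Si)$ into $H^{1}(\Si)\subset\subset H^{1/2}(\Si)$'' and leaves the rest as a routine transcription of the closed-surface argument (Proposition~\ref{prop:self-adjoint}) into the fractional setting---precisely what you carry out, including the factorization through $H^1(\Si)$ for compactness, the symmetry of $S$ for self-adjointness in the $S^{-1}$ pairing, and the injectivity of $S$ and $\Ds^{-1}$ for the trivial kernel.
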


Finally, the main result reads:

\begin{proposition}[Spectral decomposition for open surfaces]
  \label{app:prop:spectral_decomposition}
  Let $\chi_j$ be the solution of the cell problem~\eqref{eq:cell_problem}.
  Let $\{\xi_n,\lambda_n^{-1}\}_n$ be the orthonormal eigen-system of the
  operator $\Ds S$ in the space $\mathcal N(\Si)$. Then
  \begin{align*}
    \chi_j = \bS S^{-1}\Big(\sum_{n}\alpha^n_j \xi_n\Big)+C,
  \end{align*}
  where $C$ is a constant and
  \begin{align*}
    \alpha_j^n\;=\;
    \frac{\eta(\omega)}{\e-\lambda_n\,\eta(\omega)}
    \int_\Si P_T(\vec e_j)\cdot\nabla_T\overline\xi_k\dox.
  \end{align*}
  Furthermore,
  \begin{multline}
    \label{app:eq:effective_perm_new}
    \eff_{ij}\;=\;
    \varepsilon\,\delta_{ij}\;-\;
    \eta(\omega)\,\int_{\Si} P_T(\vec e_j)\cdot P_T(\vec e_i)\dox
    \\
    \;-\;
    \sum_{n}\,
    \frac{\eta^2(\omega)}{\e-\lambda_n\,\eta(\omega)}
    \; \int_{\Si} P_T(\vec
    e_j)\cdot\nabla_T \overline\xi_n \dox
    \; \int_{\Si}  \nabla_T\xi_n \cdot P_T(\vec e_i)\dox.
  \end{multline}
\end{proposition}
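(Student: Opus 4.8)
The plan is to run the argument of Appendix~\ref{app:closed} with every $L^2(\Si)$ pairing between a density and a trace reinterpreted as the $\tilde H^{-1/2}(\Si)$--$H^{1/2}(\Si)$ (resp.\ $\tilde H^{-1}(\Si)$--$H^{1}(\Si)$) duality, and with the operator $S^{-1}:H^{1/2}(\Si)\to\tilde H^{-1/2}(\Si)$ of Lemma~\ref{lem:inverseopen} in place of the closed-surface inverse. First, the density lemma above gives $\chi_j=\bS\gamma+C$ with $\gamma\in\tilde H^{-1/2}(\Si)$ and $\langle\gamma,1\rangle_\Si=0$; setting $\xi:=S\gamma\in H^{1/2}(\Si)$ we have $\langle S^{-1}\xi,1\rangle_\Si=\langle\gamma,1\rangle_\Si=0$, so $\xi\in\mathcal N(\Si)$ and $\chi_j=\bS S^{-1}\xi+C$. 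Since $\chi_j\in\cH$, in fact $\xi=\chi_j|_\Si-C\in H^1(\Si)$. The compactness and self-adjointness proposition above, together with the spectral theorem for compact self-adjoint operators, supplies an orthonormal basis $\{\xi_n\}$ of $\mathcal N(\Si)$ (for the inner product $\langle -S^{-1}\cdot,\cdot\rangle_\Si$) with $\Ds^{-1}S^{-1}\xi_n=\lambda_n^{-1}\xi_n$, $\lambda_n^{-1}\to0$, and each $\xi_n\in H^1(\Si)$ because it lies in the range of $\Ds^{-1}$ (Lemma~\ref{lem:PDE2a}). Expanding $\xi=\sum_n\alpha_j^n\xi_n$ in $\mathcal N(\Si)$ and applying the bounded map $\bS S^{-1}$ gives $\chi_j=\bS S^{-1}\big(\sum_n\alpha_j^n\xi_n\big)+C$ with $\{\alpha_j^n\}_n\in\ell^2$.

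Next I would record the two identities that drive the algebra. Testing the weak eigenvalue relation for $\xi_n$ (Lemma~\ref{lem:PDE2a}) against $\xi_k\in H^1(\Si)$ yields
\[
  \int_\Si\nabla_T\xi_n\cdot\nabla_T\overline{\xi_k}\dox
  \;=\;-\lambda_n\,\langle S^{-1}\xi_n,\xi_k\rangle_\Si
  \;=\;\lambda_n\,\delta_{nk},
\]
the last step being the $\mathcal N(\Si)$-orthonormality; in particular $\|\nabla_T\xi_n\|_{L^2(\Si)}^2=\lambda_n$ rather than $1$, which is precisely the normalization subtlety noted in Section~\ref{sec:numerical} and the reason the factor $\lambda_n$ present in the numerators of \eqref{eq:effective_perm_new} is absorbed in \eqref{app:eq:effective_perm_new}. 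A Bessel inequality in the $\|\nabla_T\cdot\|_{L^2(\Si)}$ seminorm (with respect to the orthonormal system $\{\xi_n/\sqrt{\lambda_n}\}$) then gives $\sum_n\lambda_n|\alpha_j^n|^2\le\|\nabla_T\chi_j\|_{L^2(\Si)}^2<\infty$, so $\nabla_T\chi_j=\sum_n\alpha_j^n\nabla_T\xi_n$ converges in $L^2(\Si)$. The second identity is Green's formula for the single layer potential on $Y\setminus\Si$ (using $[\partial_\vn\bS\gamma_n]=\gamma_n=S^{-1}\xi_n\in\tilde H^{-1/2}(\Si)$ paired against the trace in $H^{1/2}(\Si)$): for every $\psi\in\cH$,
\[
  \int_Y\nabla(\bS S^{-1}\xi_n)\cdot\nabla\overline{\psi}\dx
  \;=\;-\langle S^{-1}\xi_n,\psi\rangle_\Si,
\]
so in particular $\int_Y\nabla(\bS S^{-1}\xi_n)\cdot\nabla\overline{(\bS S^{-1}\xi_k)}\dx=\langle -S^{-1}\xi_n,\xi_k\rangle_\Si=\delta_{nk}$.

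To extract the coefficients I would substitute $\chi_j=\bS S^{-1}\big(\sum_n\alpha_j^n\xi_n\big)+C$ into the variational cell problem \eqref{eq:cell_problem} divided by $\im\omega$ and test with $\psi=\bS S^{-1}\xi_k$, after subtracting its $Y$-average so that $\psi\in\cH$ (only $\nabla\psi$ and $\nabla_T\psi$ enter, and $\nabla_T\psi=\nabla_T\xi_k\in L^2(\Si)$ since $\xi_k\in H^1(\Si)$). The volume term collapses to $\e\,\alpha_j^k$ by the second identity, and the surface term $-\eta(\omega)\int_\Si\nabla_T\chi_j\cdot\nabla_T\overline{\xi_k}\dox$ collapses to $-\lambda_k\,\eta(\omega)\,\alpha_j^k$ by the first, so the left-hand side equals $\big(\e-\lambda_k\,\eta(\omega)\big)\alpha_j^k$; the right-hand side is $\eta(\omega)\int_\Si P_T(\vec e_j)\cdot\nabla_T\overline{\xi_k}\dox$, which gives the claimed formula for $\alpha_j^k$. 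Finally, $\bS S^{-1}\xi|_\Si=\xi=\sum_n\alpha_j^n\xi_n$ gives $\nabla_T\chi_j=\sum_n\alpha_j^n\nabla_T\xi_n$ on $\Si$; inserting this into \eqref{eq:intro:effective_perm} and substituting the coefficient formula produces \eqref{app:eq:effective_perm_new}.

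The main obstacle is not a single hard estimate but the functional-analytic bookkeeping. One must check that each pairing that was an $L^2(\Si)$ integral in Appendix~\ref{app:closed} is a well-defined $\tilde H^{-1/2}(\Si)$--$H^{1/2}(\Si)$ (resp.\ $\tilde H^{-1}(\Si)$--$H^1(\Si)$) duality; that the test functions $\bS S^{-1}\xi_k$ genuinely lie in $\cH$, which hinges on the eigenfunction regularity $\xi_k\in H^1(\Si)$ (rather than merely $H^{1/2}(\Si)$); that the a priori membership $\chi_j\in\cH$ is invoked exactly where a boundary term must be paired against $\chi_j|_\Si$, so that the $\mathcal N(\Si)$-expansion $\xi=\sum_n\alpha_j^n\xi_n$, which converges only in the weak $\mathcal N(\Si)$ topology, is nonetheless upgraded to $L^2(\Si)$-convergence of $\nabla_T\chi_j$; and that $\bS S^{-1}$ is applied only to operations that are continuous in the topology in which the series actually converges. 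The genuinely new analytic inputs---bijectivity of $S$ on $\tilde H^{-1/2}(\Si)$ and the mapping properties of the single layer potential on open surfaces---are the cited results of Lemma~\ref{lem:inverseopen} and \cite{Wendland1979,Stephan1987}; beyond this bookkeeping the computation is the verbatim analogue of the closed-surface proof.
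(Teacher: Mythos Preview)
Your proposal is correct and follows essentially the same route as the paper: represent $\chi_j$ via the single layer potential, pass to $\xi=S\gamma\in\mathcal N(\Si)$, expand in the eigenbasis of $\Ds^{-1}S^{-1}$, and test the cell problem against $\psi=\bS S^{-1}\xi_k$ to extract the coefficients, with the $S^{-1}$-normalization producing $\int_\Si\nabla_T\xi_n\cdot\nabla_T\overline{\xi_k}\dox=\lambda_n\delta_{nk}$ and hence the missing $\lambda_n$ in the numerator relative to \eqref{eq:effective_perm_new}. The paper leaves the open-surface proof implicit (``with the modifications''), so your careful bookkeeping of the $\tilde H^{-1/2}$--$H^{1/2}$ pairings, the $H^1(\Si)$ regularity of eigenfunctions via the range of $\Ds^{-1}$, and the upgraded $L^2$ convergence of $\nabla_T\chi_j$ is exactly the verification the paper omits.
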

Note that the $S^{-1}$ inner product gives a different normalization of
$\xi_n$ hence different $\alpha_j^n$ values. In terms of the scaled
function $\tilde\xi_k:=\frac{\xi_k}{\sqrt{|\lambda_k|}}$,
\eqref{eq:spectral_problem} is satisfied and the expansion
\eqref{app:eq:effective_perm_new} takes the same form as
\eqref{eq:intro:characterization}.

%%%%%%%%%%%%%%%%%%%%%%%%%%%%%%%%%%%%%%%%%%%%%%%%%%%%%%%%%%%%%%%%%%%%%%%%%%%%%%%%
%%%%%%%%%%%%%%%%%%%%%%%%%%%%%%%%%%%%%%%%%%%%%%%%%%%%%%%%%%%%%%%%%%%%%%%%%%%%%%%%
%%%%%%%%%%%%%%%%%%%%%%%%%%%%%%%%%%%%%%%%%%%%%%%%%%%%%%%%%%%%%%%%%%%%%%%%%%%%%%%%

\section{Explicitly computable examples}\label{app:examples}

We explicitly compute the eigensystem of $\Ds^{-1}S^{-1}$ on two {\em
nonperiodic} geometries in $\R^3$. These examples qualitatively illustrate
the corresponding periodic geometries, when the inclusions are far apart
from each other. On spheres and circular cylinders in $\R^3$, the
eigensystem of $\Ds S$ are explicitly known. This is because $\Ds$ and $S$
separately have explicit eigensystems, and they share eigenfunctions. Note
that the only manifolds on which the Laplace-Beltrami operator has explicit
eigensystems are $n$-spheres, $n$-tori and Heisenberg groups. 

\subsection{Circular cylinder}

Let $\Si$ be a cylinder with a circular cross section of radius $a$. The
corresponding periodic geometry is the nanotube structure considered
numerically in Sec. \ref{sec:numerical}. We will abuse notation by denoting
the cross sections of all quantities by the same notation, since all
quantities are invariant along the axis of the cylinder. A basis for mean
zero $L^2(\Sigma)$ functions is $\left\{e^{in\theta}, n\neq0\right\}$. This
is also a set of simultaneous eigen functions for $\Ds$ and $S$:
\begin{equation*}
  \Ds e^{in\theta}=-\frac{n^2}{a^2}e^{in\theta},\quad
  Se^{in\theta}=-\frac{a}{2n}e^{in\theta}.
\end{equation*}
Thus the eigensystem for \eqref{eq:eigenvalue_problem} normalized in the
$\|\nabla_T\cdot\|_{L^2(\Si)}$ norm is
\begin{equation*}
  \lambda_n = \frac{n}{2a},\quad \xi_n^{k\color{black}}=
  \begin{cases}
    \frac{1}{n}\sqrt{\frac{a}{\pi}}\cos(n\theta),\quad
    k\color{black}=1
    \\
    \frac{1}{n}\sqrt{\frac{a}{\pi}} \sin(n\theta),\quad
    k\color{black}=2
  \end{cases}
  ,\quad n\geq1.
\end{equation*}
Using $ P_T(\vec e_1) = - \hat \theta \sin\theta $ and $ \nabla_T =
\hat\theta \frac{1}{a}\partial_\theta$, we obtain
\begin{align*}
  \int_\Si P_T(\vec
  e_1)\cdot\nabla_T\overline\xi_n^{k\color{black}}\dox
  =
  \begin{cases}
  \sqrt{\pi a},  \quad n=1, \, k=1,
  \\
  0,  \quad \text{otherwise}.
  \end{cases}
\end{align*}
Note that the for the corresponding periodic geometry, the factor $
\int_\Si P_T(\vec
e_1)\cdot\nabla_T\overline\xi_n^{k\color{black}}\dox $ in
Table~\ref{tab:eigenvalues} decays, instead of falling to zero abruptly.
This is due to the effect from other cylinders in the array. The decay
becomes faster when the size of the cylinder relative to the cell becomes
smaller.

\subsection{Sphere}

Let $\Si$ be a sphere of radius $a$. A basis for mean zero $L^2(\Sigma)$
functions is the set of spherical harmonic functions $\left\{Y_n^m, n\geq1,
-n\leq m\leq n\right\}$. This is also a set of simultaneous eigenfunctions
for $\Ds$ and $S$:
\begin{equation*}
  \Ds Y_n^m=-\frac{n(n+1)}{a^2}Y_n^m,\quad SY_{n,m}=-\frac{a}{2n+1}Y_{n,m}.
\end{equation*} 
Thus the eigensystem for \eqref{eq:eigenvalue_problem} is normalized in the
$\|\nabla_T\cdot\|_{L^2(\Si)}$ norm is
\begin{equation*}
  \lambda_n = \frac{n(n+1)}{a(2n+1)},\quad \xi_n^i= \frac{1}{\sqrt{n(n+1)}}
  Y_{n,m}, \quad n\geq1,-n\leq m\leq n.
\end{equation*}
Using $ P_T(\vec e_1) = \hat \theta \cos\theta\cos\phi -\hat\phi\sin\phi $,
$ \nabla_T = \hat\theta\frac{1}{a}\partial_{\theta}
+\hat\phi\frac{1}{a\sin\theta}\partial_{\phi}$ and the recurrence relations
for the associated Legendre polynomials, we obtain
\begin{align*}
  \int_\Si P_T(\vec e_1)\cdot\nabla_T\overline\xi_n^i\dox
  =
  \begin{cases}
\mp 2a\sqrt{\frac{\pi}{3}},  \quad n=1, \, m=\pm 1,
  \\
  0,  \quad \text{otherwise}.
  \end{cases}
\end{align*}
\end{appendices}

%%%%%%%%%%%%%%%%%%%%%%%%%%%%%%%%%%%%%%%%%%%%%%%%%%%%%%%%%%%%%%%%%%%%%%%%%%%%%%%%
%%%%%%%%%%%%%%%%%%%%%%%%%%%%%%%%%%%%%%%%%%%%%%%%%%%%%%%%%%%%%%%%%%%%%%%%%%%%%%%%

\section*{Acknowledgments}

RL acknowledges partial support by the NSF under grant DMS-1921707 and 1813698, MM acknowledges partial support by the NSF under grant DMS-1912847.

%%%%%%%%%%%%%%%%%%%%%%%%%%%%%%%%%%%%%%%%%%%%%%%%%%%%%%%%%%%%%%%%%%%%%%%%%%%%%%%%
%%%%%%%%%%%%%%%%%%%%%%%%%%%%%%%%%%%%%%%%%%%%%%%%%%%%%%%%%%%%%%%%%%%%%%%%%%%%%%%%
%%%%%%%%%%%%%%%%%%%%%%%%%%%%%%%%%%%%%%%%%%%%%%%%%%%%%%%%%%%%%%%%%%%%%%%%%%%%%%%%

% \bibliographystyle{siamplain}
% \bibliography{references}

\end{document}